\def \beq{ \begin{equation} }
\def \eeq{ \end{equation} }
\newcommand{\sgn}{\mathop{\mathrm{sgn}}}
\def \pd{\partial}
\def \ep{\epsilon}
\def \( {\big( }
\def \) {\big) }
\def \bar{\overline}
\newtheorem{theorem}{Theorem}
\newtheorem{corollary}{Corollary}
\newtheorem{lemma}{Lemma}
\newtheorem{proposition}{Proposition}
\newtheorem{definition}{Definition}
\begin{document}
\begin{article}
\begin{opening}
\title{Rosette Central Configurations, Degenerate central configurations and bifurcations}

\author{J. \surname{Lei}$^{(1)}$\thanks{Supported by NNSFC (National Natural Science Foundation of China) grant No. 10301006.}}
\author{M. \surname{Santoprete}$^{(2)}$}
\runningauthor{J. Lei and M. Santoprete}
\runningtitle{Rosette Central Configurations}
\institute{$^{(1)}$ Department of Mathematics, University of California, Irvine and Zhou Pei-Yuan Center for Applied Mathematics, Tsinghua University, Beijing, 100084 China, email: jzlei@mail.tsinghua.edu.cn\\
$^{(2)}$ Department of Mathematics, University of California, Irvine, Irvine CA, 92697 USA, email: msantopr@math.uci.edu}






\begin{abstract}
\noindent In this paper we find a class of new degenerate central configurations and bifurcations in the Newtonian $n$-body problem. In particular we analyze the Rosette central configurations, namely a coplanar configuration where $n$ particles of mass $m_1$ lie at the vertices of a regular $n$-gon, $n$ particles of mass $m_2$ lie at the vertices of another $n$-gon concentric with the first, but rotated of an angle $\pi/n$, and an additional particle of mass $m_0$ lies at the center of mass  of the system. This system admits two mass parameters $\mu=m_0/m_1$ and $\ep=m_2/m_1$. We show that, as $\mu$ varies,  if $n> 3$, there is   a degenerate central configuration and a bifurcation for every $\ep>0$, while if $n=3$ there is a bifurcations only for some values of $\epsilon$.
\end{abstract}

\keywords{N-body problem, central configurations, bifurcations, degenerate central configurations}
\end{opening}
\section{Introduction}
In the planar Newtonian $n$-body problem the simplest possible motions are such that the whole system of particles  rotates as a rigid body  about its center of mass. In this case the configuration of the bodies does not change with time. Only some special configurations of point particles are allowed such motions. These configurations are called {\it central configurations}.

Many questions were raised about the set of central configurations.
The main general open problem is the Chazy-Wintner-Smale conjecture: given $n$ positive masses $m_1,\ldots, m_n$ interacting by means of the Newtonian potential, the set of equivalence classes of central configurations is finite. Such conjecture was proved for $n=4$, in the case of equal masses, 
by \inlinecite{Albouy95} and \shortcite{Albouy96} and in the general case by 
\inlinecite{Hampton}.

Chazy believed in a stronger statement: namely that any
equivalence class of central configuration is non-degenerate. This
statement is known to be false: Palmore \cite{Palmore,Palmore76}
showed
 the existence of degenerate central configurations in the planar $n$-body problem with $n\geq 4$.
His example consists of $n-1$ particles lying at the vertices of a regular polygon and one particle at the centroid.  Unfortunately only few examples of degenerate central configurations are known.
In this paper we find a new family of degenerate central configurations that arise from  some highly symmetrical  configurations.

Another interesting problem, that is strictly related to the study of degenerate central configurations, is the  study of bifurcations in the $n$-body problem. The interest in this problem arises because, at a bifurcation, the structure of the phase space changes. Several authors studied bifurcations in the $n$-body problem (see \inlinecite{Sekiguchi} for a list of references), in particular M. Sekiguchi analyzed a highly symmetrical configuration of $2n+1$-bodies.  He considered a rosette configuration, i.e. a planar  configuration where $2n$ particles  of mass $m$ lie at the vertices of two concentric regular $n$-gons, one rotated an angle of $\pi/n$ from the other and another particle of mass $m_0$ lies at the center of the two $n$-gons. He showed that there is a bifurcation in the number of classes of central configurations for any $n\geq 3$.

In this paper we generalize Sekiguchi example and we allow the masses on the two concentric $n$-gons to be different. This considerably complicates the analysis.  Indeed, if one considers two concentric $n$-gons one with particles of mass $m_1$ and the other (rotated of an angle $\pi/n$ from the first) with particles of masses $m_2$ and a mass $m_0$ in the center, one has to deal with two mass parameters $\mu=m_0/m_1$ and $\ep=m_2/m_1$.
In this case we prove that, as $\mu$ varies,  if $n> 3$, there is  a degenerate central configuration and a bifurcation for every $\ep>0$. On the other hand the case $n=3$ is special and, in this case, as $\mu$ is varied, there is a bifurcation for some values of $\epsilon$ but not for others.

This paper is organized as follows. In the next section we introduce the equation of the $n$-body problem. In Section 3 we discuss central configurations. In the following section we introduce the highly symmetrical configurations that are the object of the paper. In Section 5 we present and prove the main results of the paper: the existence, for any $n>3$, of a bifurcation in   the number of classes of central configurations and of a new family of degenerate central configurations. In the last section we analyze the special case where $n=3$.

\section{Equations of Motion}

The planar $n$-body problem concerns the motion of $n$ particles
with masses $m_i\in{\mathbb R}^+$ and positions $q_i\in{\mathbb
R}^2$, where $i=1,\ldots,n$. The motion is governed by Newton's
law of motion \beq m_i\ddot q_i=\frac{\partial U}{\partial q_i}.
\eeq Where $U(q)$ is the Newtonian potential \beq
U(q)=\sum_{i<j}\frac{m_im_j}{|q_i-q_j|}. \eeq Let
$q=(q_1,\ldots,q_n)\in {\mathbb R}^{2n}$ and
$M=\mathrm{diag}[m_1,m_2,\ldots,m_n]$. Then the equations of
motion can be written as \beq \ddot q=M^{-1}\frac{\partial
U}{\partial q}. \eeq In studying this problem it is natural to
assume that the center of mass of the system is at the origin,
i.e. $m_1q_1+\ldots+m_nq_n=0$, and that the configuration avoids
the set $\Delta=\{q:q_i=q_j~ \mbox{for some }~q_i\neq q_j\}$.

\section{Central Configurations}
\begin{definition}
A configuration $q\in {\mathbb R}^2\setminus \Delta$ is called a central configuration if there is some constant $\lambda$ such that
\[
M^{-1}\frac{\partial U}{\partial q}=\lambda q.
\]
\end{definition}

Central configurations, as it was shown by Smale (see \cite{Abraham,Smale70}), can be viewed as rest points of a certain gradient flow. Introduce a metric in ${\mathbb R}^{2n}$ such that $\langle q,q \rangle=q^TMq$ and let
\[
S=\{q:\langle q,q\rangle=1, m_1q_1+\ldots +m_nq_n=0\}
\]
denote the unit sphere $S^{2n-3}$ with respect to this metric in
the subspace where the center of mass is at the origin.  The
scalar product $I=\langle q,q\rangle$ is called moment of inertia.
Let $S^*=S\setminus\Delta$. The vector field
$X=M^{-1}\frac{\partial U}{\partial q}+\lambda q$ where
$\lambda=U(q)$ is the gradient of $U_S$, the restriction of  $U$
to the unit sphere $S$ with respect to the metric $\langle
\cdot,\cdot\rangle$. This is because $X$ is tangent to $S$ , it
has rest points at exactly the central configurations with
$\langle q,q\rangle=1$ and $\langle X(q),v\rangle=DU(q)v$ for
every $q\in S$ and $v\in T_qS$. Furthermore the rest points of $X$
are exactly the central configurations in $S$. Note that, since
the Newtonian potential is an homogeneous function, any central
configuration is homothetic to one in $S$. Therefore the problem
of finding central configurations is essentially that of finding
rest points of the gradient flow of $U_S$ or, equivalently,
finding the critical points of $U_S$.

The gradient flow preserves  some sets of configurations with symmetry. In this paper we study one of such sets of configurations with symmetry.

We denote by $C_n$ the set of
central configuration of the  $n$-body problem. We say
that two relative equilibria in $S^*$ are equivalent (and
belong to the same equivalence class) 
if one is obtained from the other by a rotation and an homothety. The set $\tilde C_n$
is the set of equivalence classes of central configurations.

Clearly $I$ and $\Delta$ are invariant under the action of $S^1$.
Thus, we can conclude that $S^*$ is diffeomorphic to the $(2n-3)$-dimensional sphere $S^{2n-3}$ (it is actually an ellipsoid $E^{2n-3}$) with all the points $\Delta $ removed, that is
\[S^*=E^{2n-3}\setminus (E^{2n-3}\cap\Delta)\approx S^{2n-3}\setminus (S^{2n-3}\cap\Delta).\]
 Since $U_{S}$ is invariant under the action of $S^1$ it defines a map $\tilde U_{S}:S^*/S^1\rightarrow {\mathbb R}$.
 If we let $\pi:S^*\rightarrow S^*/S^1$ denote the canonical projection, $\tilde\Delta=\pi(E^{2n-3}\cap\Delta)$, and recalling that $E^{2n-3}/S^1\approx S^{2n-3}/S^1\approx {\mathbb C}P^{n-2}$, complex projective space, we are led to the investigation of the critical points of $\tilde U_{S}:{\mathbb C}P^{n-2}\setminus \tilde\Delta \rightarrow {\mathbb R}$.

Consequently one can show that the set of equivalence classes of central configurations  is given by the set of critcal points of the map $\tilde U_{S}:{\mathbb C}P^{n-2}\setminus \tilde\Delta\rightarrow {\mathbb R}$. More precisely we have the following result of Smale (see \cite{Abraham,Smale70,Smale71})

\begin{proposition}
For any $n\geq 2$ and any choices of the masses in the planar $n$-body the set of equivalence classes of central configurations is diffeomorphic to the set of critical points of the map $\tilde U_{S}:{\mathbb C}P^{n-2}\setminus\tilde\Delta\rightarrow {\mathbb R}$.
\end{proposition}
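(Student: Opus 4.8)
The plan is to prove the statement in three stages, formalizing the reduction already outlined above. First I would make precise the identification of central configurations lying on $S$ with the critical points of $U_S$. Writing the gradient of the restriction with respect to the mass metric as the tangential projection $\nabla U_S(q)=M^{-1}\partial U/\partial q-\langle M^{-1}\partial U/\partial q,\,q\rangle\,q$, two facts are needed: that $M^{-1}\partial U/\partial q$ already lies in the center-of-mass subspace $V=\{q:\sum_i m_iq_i=0\}$ (which follows from translation invariance of $U$, giving $\sum_i\partial U/\partial q_i=0$), so that $\nabla U_S$ is genuinely tangent to $S\subset V$; and that by Euler's identity for the degree $-1$ homogeneous potential one has $\langle M^{-1}\partial U/\partial q,\,q\rangle=(\partial U/\partial q)^Tq=-U(q)$ on $S$. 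Hence $\nabla U_S(q)=M^{-1}\partial U/\partial q+U(q)\,q=X(q)$, which vanishes exactly when $M^{-1}\partial U/\partial q=-U(q)\,q$, i.e. precisely at a central configuration (so that the constant of the Definition equals $\lambda=-U(q)$). Using homogeneity of $U$, every central configuration is homothetic to a unique representative with $\langle q,q\rangle=1$, so modding out by homothety identifies the homothety classes of central configurations with the critical points of $U_S$ in $S^*$.

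Second I would quotient by rotations. The $S^1$-action by simultaneous rotation is free on $S^*$, since a nontrivial rotation can fix $q$ only if every $q_i$ is at the origin, which is excluded on $S$ by $\langle q,q\rangle=1$; the action is moreover proper (as $S^1$ is compact), so $\pi:S^*\to S^*/S^1$ is a smooth surjective submersion, indeed a principal $S^1$-bundle. Because $U_S$ is $S^1$-invariant it descends to $\tilde U_S$ with $dU_S(q)=d\tilde U_S(\pi(q))\circ d\pi(q)$. Surjectivity of $d\pi(q)$ then forces $dU_S(q)=0$ if and only if $d\tilde U_S(\pi(q))=0$, so $\pi$ sends the $S^1$-orbits of critical points of $U_S$ bijectively onto the critical set of $\tilde U_S$. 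Combined with the first stage, this realizes $\tilde C_n$ as the critical set of $\tilde U_S$ on $S^*/S^1$.

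Finally I would identify the quotient with complex projective space. Identifying $\mathbb{R}^2$ with $\mathbb{C}$, the subspace $V$ becomes the complex hyperplane $\{\sum_i m_iq_i=0\}\cong\mathbb{C}^{n-1}$, on which simultaneous rotation acts as scalar multiplication by $e^{i\theta}$. The mass-metric ellipsoid $E^{2n-3}$ is $S^1$-equivariantly diffeomorphic to the standard sphere $S^{2n-3}\subset\mathbb{C}^{n-1}$ via the radial rescaling $q\mapsto q/\sqrt{\langle q,q\rangle}$, which commutes with $e^{i\theta}$ because the mass metric is rotation invariant; the quotient of $S^{2n-3}$ by this Hopf action is $\mathbb{C}P^{n-2}$. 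Since $\Delta$ is rotation invariant, removing it yields a diffeomorphism $S^*/S^1\approx\mathbb{C}P^{n-2}\setminus\tilde\Delta$, under which the critical set of $\tilde U_S$ on the left corresponds to that on the right. Chaining the three stages gives the asserted diffeomorphism between $\tilde C_n$ and the set of critical points of $\tilde U_S:\mathbb{C}P^{n-2}\setminus\tilde\Delta\to\mathbb{R}$.

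The hard part will be the first stage: correctly handling the two successive projections (onto $V$, then onto $T_qS$) and tracking the sign in $\lambda=-U(q)$, together with checking that normalization to $S$ descends to a smooth, well-defined map on homothety classes. The remaining two stages are conceptually routine once freeness and properness are established, though one should be explicit that here ``diffeomorphic'' means that the bijection on critical sets is the restriction of the global diffeomorphism $S^*/S^1\approx\mathbb{C}P^{n-2}\setminus\tilde\Delta$, since the critical set itself need not carry a manifold structure a priori.
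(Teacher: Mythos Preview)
The paper does not actually supply a proof of this proposition: it is stated as a result of Smale, with references to Abraham--Marsden and Smale's papers, immediately after the informal discussion preceding it. What the paper does provide is precisely the sketch your three stages formalize: the identification of the gradient vector field $X=M^{-1}\partial U/\partial q+U(q)\,q$ on $S$ with $\nabla U_S$, the homogeneity reduction to $S$, and the passage to the $S^1$-quotient $E^{2n-3}/S^1\approx\mathbb{C}P^{n-2}$.

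Your proposal is correct and follows the same route as that sketch, only with the details filled in (translation invariance to land in $V$, Euler's identity for the tangential projection, freeness and properness of the $S^1$-action, and the equivariant radial map from the mass ellipsoid to the round sphere). Your observation that the constant in Definition~1 comes out as $\lambda=-U(q)$ is accurate; the paper's text is slightly loose here, reusing the symbol $\lambda$ both for the coefficient $U(q)$ in $X$ and for the constant in the central-configuration equation, which differ by a sign. Your closing caveat about what ``diffeomorphic'' means for a critical set that need not be a manifold is a genuine point the paper glosses over.
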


Let  $q$ be a critical point of $\tilde U_S$. A critical point of $\tilde U_S$ is degenerate provided that the hessian $D^2\tilde U_S(q)$ has a nontrivial nullspace. We have the following definition
\begin{definition}
An equivalence  class of central configurations is degenerate (nondegenerate) provided that the corresponding critical point $q$ of $\tilde U_S$ is degenerate (nondegenerate).
\end{definition}
\section{Symmetrical Configurations }

\begin{figure}[t]
\begin{center}
\resizebox{!}{8cm}{\includegraphics{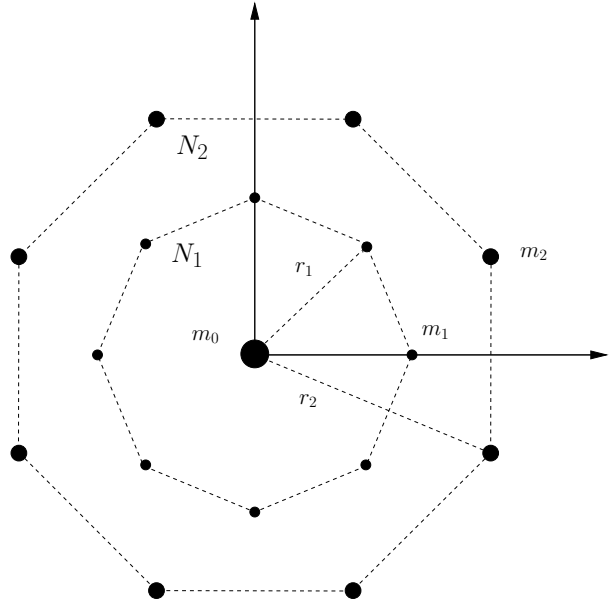}}
\end{center}
\caption{Rosette configuration for $n=6$}
\label{rosette}
\end{figure}

Consider the set $\Sigma$ of all the configuration in ${\mathbb R}^2$ consisting of two concentric regular $n$-gons, one rotated of an angle $\pi/n$ from the other,  with a mass in their common center of symmetry
 (see Figure \ref{rosette}).

Let $m_0$, $m_1$ and $m_2$ be the masses in the center of mass, on the  $n$-gon $N_1$ and on the $n$-gon $N_2$ respectively.
 Then it follows from the symmetry of the configuration that the gradient of $\tilde U_S$ is tangent to $\tilde \Sigma$ (where $\tilde \Sigma=\pi(E^{2n-3}\cap\Sigma$)). Thus to find  equivalence classes of central configurations in $\tilde\Sigma$ it is sufficient to study the
critical points of $\tilde U_S|_{\tilde \Sigma}$. Since $\tilde
\Sigma$ is one dimensional, only one parameter is needed to
describe such symmetric  configuration. This is a great
simplification. Figure \ref{rosette} shows two parameters
$(r_1,r_2)$ which can be used to describe such a configuration.

The potential in these coordinates is
\[
U(q)=(nm_1)^2 \bar U(r_1,r_2)
\]
where
\[
\bar U(r_1,r_2)= \frac \mu n \left(\frac{1}{r_1}+
\frac{\ep}{r_2}\right)+k_n\left(\frac{1}{r_1}
+\frac{\ep^2}{r_2}\right)+\frac{1}{n}\sum_{k=1}^{n}
\frac{\ep}{\sqrt{r_1^2+r_2^2-2r_1r_2\cos\phi_k}}\] with,
$\mu=m_0/m_1$, $\ep=m_2/m_1$, $\phi_k=(2k-1)\pi/n$ and
\[k_n=\frac{1}{4n}\sum_{k=1}^{n-1} \csc\frac \pi n k\]
is the potential of a regular $n$-gon of unit size and unit
masses. The last term in $\bar{U}$ is the moment of inertia
\[I(q)=\langle q,q \rangle=\bar I(r_1,r_2)=m_1n(r_1^2+\ep r_2^2).\]
The  central configurations are the solutions of the equation $\nabla\tilde U_S|_{\tilde \Sigma}=0 $ or $\nabla U=\frac{\lambda}{2} \nabla I$ (with $\lambda=U$), that in this case can be written as
\beq
\begin{split}
\frac{\pd \bar U}{\pd r_1}=& m_1 n \lambda r_1\\
\frac{\pd \bar U}{\pd r_2}=& m_1 n \lambda \ep r_2.
\end{split}
\eeq
Solving the equations above for $\lambda$ one gets
\beq
\frac{1}{r_2^3}F(x)=0
\eeq
where
\beq\label{centralca}
F(x,\ep,\mu)= \frac \mu n(1-x^3) +k_n(\ep-x^3)+\frac{x^3}{n} \sum_{k=1}^n\frac{(1-\ep)-\frac 1 x (1-\ep x^2)\cos\phi_k}{(1+x^2-2x\cos\phi_k)^{3/2}}
\eeq
and $x=r_2/r_1$.
The equation for the central configuration above depends only on one parameter and  is invariant under the transformation $(x,\ep,\mu)\rightarrow (\frac 1 x, \frac 1 \ep, \mu\ep)$. Thus  it suffices to study the central configurations with $0<\ep\leq 1$.

The  case $\ep=1$ was studied  in detail by   \inlinecite{Sekiguchi} that proved the following
\begin{theorem}\label{theosekiguchi}
If $n=2$ the number of central configurations is one for any value of $\mu$. If $n\geq 3$ the number of central configurations is three for $\mu<\mu_c(n)$ and one for $\mu\geq \mu_c(n)$, where
\[
\mu_c(n)=\frac {1}{12} \sum_{k=1}^n \frac{\cos\phi_k}{\sin^3(\phi_k/2)}-nk_n
\]
\end{theorem}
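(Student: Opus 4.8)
The plan is to set $\ep=1$ in the central configuration equation \eqref{centralca} and count the positive solutions $x$ of $F(x,1,\mu)=0$, each of which corresponds to a central configuration in the family. Putting $\ep=1$ makes the coefficient $(1-\ep)$ vanish and turns $1-\ep x^2$ into $1-x^2$, so
\[
F(x,1,\mu)=(1-x^3)\left(\frac{\mu}{n}+k_n\right)-\frac{x^2(1-x^2)}{n}\,S(x),\qquad S(x)=\sum_{k=1}^n\frac{\cos\phi_k}{(1+x^2-2x\cos\phi_k)^{3/2}}.
\]
Two structural facts are then immediate. First, $x=1$ is a solution for \emph{every} $\mu$, since both $1-x^3$ and $1-x^2$ vanish there; this is the configuration in which the two $n$-gons share the same radius. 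Second, the invariance of the equation under $(x,\ep,\mu)\mapsto(1/x,1/\ep,\mu\ep)$ specializes at $\ep=1$ to the symmetry $x\mapsto 1/x$, so the nontrivial solutions occur in pairs $\{x_0,1/x_0\}$. Hence the total number of solutions is odd, and it suffices to count solutions in $(0,1)$.

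Next I would eliminate $\mu$. For $x\neq 1$ one divides $F=0$ by $1-x^3$ and cancels the common factor $1-x$ in $\frac{x^2(1-x^2)}{1-x^3}=\frac{x^2(1+x)}{1+x+x^2}$, obtaining the single branch
\[
\mu=\mu(x):=-nk_n+\frac{x^2(1+x)}{1+x+x^2}\,S(x).
\]
Counting configurations at a given $\mu$ is then the same as counting intersections of the horizontal line of height $\mu$ with the graph of $\mu(x)$. I would record the endpoints: since $\sum_{k=1}^n\cos\phi_k=0$ (a root-of-unity sum) one has $S(0^+)=0$ and $\mu(0^+)=-nk_n$, while evaluating at $x=1$ with $1-\cos\phi_k=2\sin^2(\phi_k/2)$ gives
\[
\mu(1)=-nk_n+\frac{2}{3}\cdot\frac18\sum_{k=1}^n\frac{\cos\phi_k}{\sin^3(\phi_k/2)}=\frac{1}{12}\sum_{k=1}^n\frac{\cos\phi_k}{\sin^3(\phi_k/2)}-nk_n=\mu_c(n),
\]
which is precisely the value at which $\partial F/\partial x$ vanishes at $x=1$; thus $\mu_c(n)$ is the pitchfork value. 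The symmetry $\mu(1/x)=\mu(x)$ also forces $\mu'(1)=0$, so $x=1$ is a critical point of the branch, and one checks (the dominant terms being those with $\phi_k$ near $0$ and $2\pi$) that $\mu_c(n)>0$ for $n\geq 3$.

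The crux is to show that, for $n\geq 3$, $\mu(x)$ is strictly increasing on $(0,1)$. Granting this, $\mu$ is a bijection of $(0,1)$ onto $(-nk_n,\mu_c(n))$; since every admissible $\mu\geq 0$ exceeds the nonphysical floor $-nk_n$, each $\mu\in[0,\mu_c(n))$ has exactly one preimage $x_0\in(0,1)$, which together with $x=1$ and its partner $1/x_0$ gives three central configurations, whereas for $\mu\geq\mu_c(n)$ the line lies above $\max_{(0,1)}\mu(x)=\mu_c(n)$, leaving only $x=1$ and hence one configuration. I expect this monotonicity to be the main obstacle: although the prefactor $x^2(1+x)/(1+x+x^2)$ is increasing, $S(x)$ is a sum of terms of both signs and is itself \emph{not} monotone, so one must prove that the derivative of the product stays positive on $(0,1)$. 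The natural route is to pair the indices $k\leftrightarrow n-k+1$, for which $\phi_{n-k+1}=2\pi-\phi_k$ so that $\cos\phi_k$, $\sin(\phi_k/2)$, and the distance $1+x^2-2x\cos\phi_k$ are all shared, and then bound $\mu'(x)$ below by a manifestly positive expression; this is the delicate estimate and I would not expect it to reduce to a one-line computation.

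Finally, the case $n=2$ is handled separately and trivially: there $\phi_1=\pi/2$ and $\phi_2=3\pi/2$, so $\cos\phi_1=\cos\phi_2=0$, whence $S\equiv 0$ and $\mu(x)\equiv -nk_n$. The equation $F=0$ then collapses to $(1-x^3)\left(\frac{\mu}{n}+k_n\right)=0$, whose only positive root (for $\mu\neq-nk_n$, in particular for every admissible $\mu$) is $x=1$. Thus for $n=2$ there is exactly one central configuration for every value of $\mu$, completing the statement.
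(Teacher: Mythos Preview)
The paper does not actually prove this theorem: it is quoted from \cite{Sekiguchi} and simply attributed there, so there is no ``paper's own proof'' to compare against. I will therefore comment on your argument on its own merits.

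Your reduction is correct and matches the paper's notation exactly: at $\ep=1$ your $\mu(x)$ is precisely the function $h_0(x)$ defined in the paper, the observation that $x=1$ is always a root is right, the $x\leftrightarrow 1/x$ symmetry is genuine (it is the $\ep=1$ specialization of the invariance the paper records), and your evaluation $\mu(1)=\mu_c(n)$ is the correct identification of the bifurcation value. The $n=2$ case is also handled cleanly.

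However, your proposal is not a proof: you yourself identify the key step---strict monotonicity of $\mu(x)=h_0(x)$ on $(0,1)$---and then leave it open, saying only that it is ``the delicate estimate'' and sketching a pairing $k\leftrightarrow n{-}k{+}1$. That pairing is not progress: since $\phi_{n-k+1}=2\pi-\phi_k$, it merely says each term of $S(x)$ occurs twice and does nothing to control the sign of $\mu'(x)$. Without this monotonicity (or some substitute, e.g.\ a convexity/second-derivative argument at $x=1$ together with a count of critical points), you cannot conclude that the equation $\mu=h_0(x)$ has \emph{exactly} one solution in $(0,1)$ for $\mu<\mu_c(n)$ and none for $\mu\ge\mu_c(n)$; a priori $h_0$ could oscillate and produce additional pairs of roots. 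A second, smaller gap is the assertion $\mu_c(n)>0$ for $n\ge3$, which you dismiss by pointing to ``dominant terms''; note that $\mu_c(3)=\tfrac{7}{12}-\tfrac{1}{\sqrt3}\approx 0.006$, so the margin is tiny and a genuine inequality is needed. In short, the architecture is right, but the load-bearing estimate is missing; to complete the argument you would need to supply Sekiguchi's monotonicity proof (or an alternative).
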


It is therefore sufficient to analyze the problem with $0<\ep<1$.

\begin{proposition}
For every $\mu>0$ and $\ep>0$ there is at least one rosette central configuration.
\end{proposition}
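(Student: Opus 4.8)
The plan is to reduce the existence of a rosette central configuration to the existence of a positive root of the one-variable function $F(\cdot,\ep,\mu)$ defined in (\ref{centralca}), and then to locate such a root by the intermediate value theorem. For any fixed $\mu>0$ and $\ep>0$, a value $x=r_2/r_1\in(0,\infty)$ solving $F(x,\ep,\mu)=0$ corresponds, after rescaling onto $S$, to a genuine central configuration in $\tilde\Sigma$, and conversely every rosette central configuration arises in this way. Thus it suffices to prove that $F(\cdot,\ep,\mu)$ has at least one zero in $(0,\infty)$.

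First I would check that $F$ is continuous on all of $(0,\infty)$. The only possible singularities come from the denominators $(1+x^2-2x\cos\phi_k)^{3/2}$ appearing in the sum. Writing $1+x^2-2x\cos\phi_k=(x-\cos\phi_k)^2+\sin^2\phi_k$, such a term vanishes only if $x=\cos\phi_k$ and $\sin\phi_k=0$ simultaneously; since $\phi_k=(2k-1)\pi/n$, the only way to have $\sin\phi_k=0$ is $\phi_k=\pi$, in which case $1+x^2-2x\cos\phi_k=(1+x)^2>0$ for $x>0$. Hence no denominator vanishes and $F$ is continuous (in fact smooth) on $(0,\infty)$; geometrically this just records the fact that the two $n$-gons never collide when $r_1,r_2>0$.

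Next I would evaluate the two one-sided limits. As $x\to0^+$ the polynomial part gives $\frac{\mu}{n}(1-x^3)+k_n(\ep-x^3)\to\frac{\mu}{n}+k_n\ep$, while in the sum each numerator, after multiplication by $x^3$, is of order $x^2$ or higher and each denominator tends to $1$, so the entire sum tends to $0$. Therefore $\lim_{x\to0^+}F=\frac{\mu}{n}+k_n\ep>0$. For the behaviour as $x\to+\infty$, the polynomial part behaves like $-\big(\frac{\mu}{n}+k_n\big)x^3$, whereas each term of the sum, once multiplied by $\frac{x^3}{n}$, is only $O(x)$: its numerator is $O(x^4)$ with dominant contribution $\ep x^4\cos\phi_k$, while its denominator is asymptotic to $x^3$, so the quotient is $O(x)$, far smaller than $x^3$. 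Consequently $F(x)\to-\infty$ as $x\to+\infty$, and in particular $F$ is eventually negative.

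Since $F$ is continuous on $(0,\infty)$, strictly positive near $0$, and negative for large $x$, the intermediate value theorem produces a point $x_0\in(0,\infty)$ with $F(x_0,\ep,\mu)=0$, which is the sought central configuration. The only step demanding any care is the asymptotic estimate at infinity: one must confirm that the mutual-interaction term does not grow as fast as the cubic polynomial terms. I expect this to be the main (though still routine) obstacle; the cancellation $\sum_{k=1}^n\cos\phi_k=0$ even sharpens the estimate, but it is not needed, as the crude bound $O(x)$ already shows that the cubic terms dominate.
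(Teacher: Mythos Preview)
Your proposal is correct and follows exactly the same approach as the paper: evaluate $F$ near $0$ and near $+\infty$ and apply the intermediate value theorem. The paper's proof is a two-line version of yours (it omits the continuity check and the asymptotic estimates you spell out); incidentally, your value $\tfrac{\mu}{n}+k_n\ep$ for $\lim_{x\to0^+}F$ is the correct one, whereas the paper records $\tfrac{\mu}{n}+k_n$, evidently a typo.
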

\begin{proof}
Since $\lim_{x \to 0}F(x)=\left(\frac \mu n +k_n\right)>0$ and $\lim_{x\to \infty} F(x)=-\infty<0$, by the intermediate value theorem, the equation $F(x)=0$ has at least one solution.
\end{proof}

When $n=2$ it can be shown that, for every value of $\epsilon$, there is only one class of central configurations and no bifurcation occur, or more precisely we have
the following

\begin{proposition}
If $n=2$ for every $\mu>0$ and $\ep>0$ there is only one  rosette central configurations.
\end{proposition}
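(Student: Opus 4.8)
The plan is to use the reduction already set up above, namely that the rosette central configurations are exactly the positive zeros of the one-variable function $F(x,\epsilon,\mu)$ of \eqref{centralca}, and that by the invariance $(x,\epsilon,\mu)\mapsto(1/x,1/\epsilon,\mu\epsilon)$ it suffices to treat $0<\epsilon\le 1$. First I would specialize \eqref{centralca} to $n=2$. Then $\phi_1=\pi/2$ and $\phi_2=3\pi/2$, so $\cos\phi_k=0$ and $1+x^2-2x\cos\phi_k=1+x^2$ for $k=1,2$, while $k_2=\tfrac18$; the sum collapses and $F$ takes the explicit form
\[
F(x)=\frac{\mu}{2}\,(1-x^3)+\frac{1}{8}\,(\epsilon-x^3)+\frac{(1-\epsilon)\,x^3}{(1+x^2)^{3/2}} ,
\]
which is smooth on $(0,\infty)$ since the denominator never vanishes.

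Next I would analyse its derivative. Using $\dfrac{d}{dx}\dfrac{x^3}{(1+x^2)^{3/2}}=\dfrac{3x^2}{(1+x^2)^{5/2}}$ one obtains
\[
F'(x)=3x^2\,h(x),\qquad h(x):=-\frac12\Big(\mu+\frac14\Big)+\frac{1-\epsilon}{(1+x^2)^{5/2}} .
\]
Since $0<\epsilon\le 1$, the term $(1-\epsilon)(1+x^2)^{-5/2}$ is non-increasing in $x$, hence $h$ is non-increasing on $(0,\infty)$, and $h(x)\to-\tfrac12(\mu+\tfrac14)<0$ as $x\to\infty$. Therefore $h$ changes sign at most once, from positive to negative, so $F$ has at most one interior critical point, which when present is a maximum; in other words $F$ is either strictly decreasing on $(0,\infty)$ or \emph{unimodal}, increasing up to a single maximum and strictly decreasing afterwards.

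Finally I would combine this with the boundary values $F(0^+)=\tfrac{\mu}{2}+\tfrac{\epsilon}{8}>0$ and $\lim_{x\to\infty}F(x)=-\infty$, which are immediate from the explicit form and were already recorded in the existence proposition above. On the initial increasing stretch (if any) $F$ stays above $F(0^+)>0$ and has no zero; on the final strictly decreasing stretch $F$ descends from a positive value to $-\infty$ and so vanishes exactly once. Hence $F$ has a unique positive zero, giving a single rosette central configuration. The main obstacle is that $F$ is \emph{not} monotone in general: when $\epsilon<1$ the interaction term $(1-\epsilon)x^3/(1+x^2)^{3/2}$ is increasing and, for small $\mu$, dominates the decreasing polynomial part near $x=0$, so one cannot simply read off the sign of $F'$. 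The idea that resolves this is to factor out $3x^2$ and observe that the remaining factor $h$ is monotone; this unimodality, together with the fixed signs of $F$ at the two ends, is what forces exactly one zero.
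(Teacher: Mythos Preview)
Your proof is correct and follows essentially the same route as the paper: specialize $F$ to $n=2$, factor $F'(x)=3x^2 h(x)$ with $h$ monotone, deduce that $F$ is either decreasing or unimodal on $(0,\infty)$, and combine with $F(0^+)>0$ and $F(\infty)=-\infty$ to get a unique zero. Your presentation is slightly more streamlined in that you argue unimodality directly rather than splitting into the explicit cases $\mu\ge 7/4$ versus $\mu<7/4$, $\epsilon\gtrless 7/8-\mu/2$, but the underlying idea is identical.
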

\begin{proof}
In this case \beq F(x,\ep,\mu)=\frac \mu 2 (1-x^3)+\frac 1 8
(\ep-x^3)+\frac{x^3(1-\ep)}{(1+x^2)^{3/2}} \eeq
 $\lim_{x\to 0}F(x)=\frac \mu 2+\frac{\ep}{8}$ and $\lim_{x\to \infty}F(x)=-\infty$ so $F(x)=0$ has at least one solution.
We need only to prove the statement for $0<\ep\leq 1$.
If $\ep=1$ $F(x)$ is a monotonically decreasing and the statement follows. If $0<\ep<1$ consider
 \beq
 F'(x)=3x^2\left(-\left(\frac \mu 2 +\frac 1 8\right ) +\frac{(1-\ep)}{(1+x^2)^{5/2}}\right).
 \eeq
Clearly one solution of $F'(x)=0$ is $x=0$. The other solutions can be found studying the equation $\eta(x)=\frac \mu 2+\frac 1 8$, where
\[
\eta(x)= \frac{(1-\ep)}{(1+x^2)^{5/2}}
\]
is a monotonically decreasing function and $\eta(0)=(1-\ep)$. The equation $\eta(x)=\frac \mu 2+\frac 1 8$ has no solutions if $\mu\geq\frac 7 4$  or $\mu<\frac 7 4$ and $\ep\in(\frac 7 8-\frac \mu 2, 1)$. It has one solution $x^*$ if $\mu<\frac 7 4$ and $\ep\in (0,\frac 7 8-\frac \mu 2]$. Consequently if $\mu\geq\frac 7 4$  or $\mu<\frac 7 4$ and  $\ep\in(\frac 7 8-\frac \mu 2, 1)$  $F'(x)$ is always negative, $F(x)$ monotonically decreasing and $F(x)=0$ has only one solution. On the other hand, if $\mu<\frac 7 4$ and  $\ep\in (0,\frac 7 8-\frac \mu 2]$, $F'(x)$ is  positive  for $x\in(0,x^*)$ and negative for $x\in(x^*,\infty)$. Thus $F(x)$ is increasing for $x\in(0,x^*)$, decreasing for $x\in(x^*,\infty)$ and $F(x)=0$ has one solution since $F(0)>0$.
\end{proof}

\section{Bifurcations and degenerate central configurations for $n>3$}
In this section we consider the rosette central configurations for $n>3$. The main result is the existence of a bifurcations for every value of $\ep$ as the parameter $\mu$ increases. The case  $n=3$ is studied in the next section. More precisely we prove the following

\begin{theorem}\label{mainth}
For any $n>3$ and $\ep>0$  there is at least one  value $\mu_0$  corresponding to a bifurcation in the number of equivalence classes of rosette central configurations as the parameter $\mu>0$ increases.
\end{theorem}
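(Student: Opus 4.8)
The plan is to exploit that $F$ is affine in $\mu$. Writing $F(x,\ep,\mu)=\frac{\mu}{n}(1-x^3)+H(x,\ep)$, where $H$ denotes the $\mu$-independent part of $F$, the equation $F=0$ can be solved for $\mu$ whenever $x\neq 1$, giving $\mu=g(x):=nH(x,\ep)/(x^3-1)$. For fixed $\ep$ each rosette central configuration with ratio $x=r_2/r_1$ is a point on the graph of $g$, and the number of equivalence classes at a given $\mu$ equals the number of solutions of $g(x)=\mu$ on $(0,\infty)\setminus\{1\}$. Hence a bifurcation occurs exactly at a critical value of $g$, i.e. at a point where $F=\pd F/\pd x=0$ simultaneously: since $\pd F/\pd\mu=(1-x^3)/n\neq 0$ for $x\neq 1$, the implicit function theorem identifies the folds of the solution curve with the zeros of $g'$. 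Such a fold is moreover a degenerate critical point of $\tilde U_S|_{\tilde\Sigma}$, which is the source of the accompanying degenerate central configuration.

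Rather than locating the fold directly, I would detect it by comparing the solution count for small and for large $\mu$. Since $F(0^+,\ep,\mu)>0$ and $F(\infty,\ep,\mu)=-\infty$, the map $F(\cdot,\ep,\mu)$ has an odd number of sign-changing zeros; in particular the count can only change by an even number, so any change is realized by a fold. For large $\mu$ the term $\frac{\mu}{n}(1-x^3)$ dominates away from $x=1$, so every zero collapses into a neighborhood of $x=1$; writing $x=1+\de$ one finds $\de\approx nH(1,\ep)/(3\mu)$, and $F$ is monotone near $x=1$, so for $\mu$ large there is exactly one central configuration. It therefore remains to exhibit some $\mu^{*}>0$ at which there are at least three.

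For small $\mu$ the zeros of $F$ are perturbations of the sign-changing zeros of $H(\cdot,\ep)$, so it suffices to show $H$ has at least three zeros on $(0,\infty)$. Two are elementary: $H(0^+,\ep)=k_n\ep>0$ and $H(\infty,\ep)=-\infty$. The first nontrivial ingredient is the inequality
\[
H(1,\ep)=(1-\ep)\Big(\tfrac{1}{4n}\sum_{k=1}^{n}\csc\tfrac{(2k-1)\pi}{2n}-k_n\Big)>0
\qquad(0<\ep<1,\ n\ge 3),
\]
a comparison of $\csc$-sums over the half-integer and the integer nodes, which I would establish by a convexity / Riemann-sum estimate. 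The genuine obstacle is the second ingredient: that $H(x,\ep)<0$ for some $x\in(0,1)$. Combined with $H(0^+)>0$ and $H(1^-)>0$ this yields two zeros in $(0,1)$, and with $H(\infty)=-\infty$ a third in $(1,\infty)$, giving the sign pattern that forces three zeros.

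Proving $\min_{(0,1)}H<0$ is precisely where the hypothesis $n>3$ enters, since the introduction records that the conclusion fails for some $\ep$ when $n=3$. A small-$x$ expansion (using $\sum_k\cos\phi_k=\sum_k\cos 2\phi_k=0$, the latter valid only for $n\ge 3$) shows merely that $H$ decreases from $k_n\ep$ like $-x^3(k_n+\ep+\tfrac12)$, which is not enough on its own; the sign of $H$ at its interior minimum is governed by the full sum $\frac1n\sum_k(\cdots)(1+x^2-2x\cos\phi_k)^{-3/2}$, whose $n$-dependence is delicate. I expect the bulk of the work to be a sharp lower bound, valid for $n>3$ and uniform in $\ep\in(0,1)$, on the negative contribution of this sum at a well-chosen $x\in(0,1)$, strong enough to dominate $k_n\ep$, together with a check that no such $x$ exists for some $\ep$ when $n=3$. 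Once $\min_{(0,1)}H<0$ is in hand the count is at least three for small $\mu$ and exactly one for large $\mu$, so by continuity some $\mu_0\in(0,\infty)$ is a bifurcation value, as claimed.
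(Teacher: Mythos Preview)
Your architecture---compare the solution count for small $\mu$ (at least three) with that for large $\mu$ (exactly one)---is exactly the paper's, and your identification of the crux as showing $H(x,\ep)<0$ at some $x\in(0,1)$ is correct: in the paper's notation $h(x,\ep)=nH(x,\ep)/(x^3-1)$, so this is equivalent to finding $x_n^*\in(0,1)$ with $h(x_n^*,\ep)>0$, which is precisely Proposition~\ref{propx<1}(a). But the proposal stops short of the one idea that makes this step tractable. Since $F$ (hence $H$ and $h$) is affine in $\ep$, one has $h(x,\ep)=h_0(x)+(1-\ep)\,h_1(x)$; the paper's device is to choose $x_n^*$ to be a \emph{zero of $h_1$}, so that $h(x_n^*,\ep)=h_0(x_n^*)$ is independent of $\ep$, and the ``uniform in $\ep\in(0,1)$'' estimate you anticipate collapses to the single one-variable inequality $h_0(x_n^*)>0$. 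Without this decoupling you have not indicated how the uniform bound would be obtained, and I do not see an alternative route.

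Even with the decoupling, the remaining inequality is the hard core of the proof (Lemma~\ref{lemmamain}): the paper establishes the existence of $x_n^*\in(0.54,1)$ with $h_1(x_n^*)=0$ and $h_0(x_n^*)>0$ by direct numerical verification for $4\le n\le 106$ and by analytic estimates (a trapezoidal lower bound for $k_n$, and a grouping of the indices $k$ according to the sign of $\cos\phi_k-R_1(x)$) only for $n\ge 107$; your proposal underestimates this. Two smaller points: your large-$\mu$ argument is essentially correct but should note that the collapse to a neighborhood of $x=1$ uses boundedness of $H$ on compacta together with $H\to-\infty$ as $x\to\infty$; and you should make explicit the reduction to $0<\ep<1$ via the symmetry $(x,\ep,\mu)\mapsto(1/x,1/\ep,\mu\ep)$ and the case $\ep=1$ via Theorem~\ref{theosekiguchi}, since your sign computation $H(1,\ep)>0$ relies on $\ep<1$.
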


An important consequence of the existence of a bifurcation is the existence of a degenerate equivalence class of rosette central configurations
\begin{corollary}
For any $n>3$ and $\ep>0$ there is at least one  value  $\mu_0$ of $\mu$ for which there is a degenerate equivalence class of rosette central configuration.
\end{corollary}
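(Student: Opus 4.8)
The plan is to deduce the corollary directly from Theorem~\ref{mainth}. The guiding principle is that a bifurcation in the number of central configurations is only possible where the governing equation $F=0$ acquires a multiple root, and that such a multiple root is exactly a degenerate critical point of $\tilde U_S$.

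First I would record why a change in the number of equivalence classes forces a multiple root of $F$. For fixed $\ep>0$ the rosette central configurations in $\tilde\Sigma$ are the positive zeros in $x$ of the smooth function $F(x,\ep,\mu)$ of \eqref{centralca}, where $x=r_2/r_1$ is a coordinate on the one dimensional manifold $\tilde\Sigma$. If at the bifurcation value $\mu_0$ every zero $x_0$ satisfied $\pd F/\pd x\,(x_0,\ep,\mu_0)\neq 0$, the implicit function theorem would produce a smooth branch $x(\mu)$ through each such zero, so that no branch could be created or destroyed as $\mu$ crosses $\mu_0$; this contradicts Theorem~\ref{mainth}. Hence there is a point $x_0$ with
\[
F(x_0,\ep,\mu_0)=0,\qquad \frac{\pd F}{\pd x}(x_0,\ep,\mu_0)=0.
\]

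Next I would translate this pair of conditions into the language of $\tilde U_S$. Let $q_0\in\tilde\Sigma$ be the configuration with ratio $x_0$. Since the zeros of $F$ are precisely the critical points of $\tilde U_S|_{\tilde\Sigma}$, the first condition says that $q_0$ is such a critical point, hence a critical point of $\tilde U_S$ because $\nabla\tilde U_S$ is tangent to $\tilde\Sigma$. A short computation along the constraint curve $\bar I=1$ shows that $F$ equals the derivative of $\tilde U_S|_{\tilde\Sigma}$ times a factor nonvanishing for $r_1,r_2>0$; differentiating this relation at $x_0$ and using the first condition, the second condition $\pd F/\pd x\,(x_0,\ep,\mu_0)=0$ becomes equivalent to the vanishing of the second derivative of $\tilde U_S|_{\tilde\Sigma}$ at $q_0$, i.e. the restriction of the Hessian $D^2\tilde U_S(q_0)$ to the tangent line $T_{q_0}\tilde\Sigma$ is zero.

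Finally I would upgrade this tangential degeneracy to genuine degeneracy of $D^2\tilde U_S(q_0)$. The configurations in $\tilde\Sigma$ are exactly those fixed by the finite rosette symmetry group $G$ acting on ${\mathbb C}P^{n-2}$, so $\tilde\Sigma$ is a component of the fixed-point set of $G$ and $T_{q_0}\tilde\Sigma$ is the full space of $G$-invariant tangent directions. Since $\tilde U_S$ is $G$-invariant, $D^2\tilde U_S(q_0)$ is equivariant and therefore preserves the orthogonal splitting $T_{q_0}({\mathbb C}P^{n-2})=T_{q_0}\tilde\Sigma\oplus(T_{q_0}\tilde\Sigma)^\perp$ into invariant and non-invariant directions. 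Thus it is block diagonal and maps the line $T_{q_0}\tilde\Sigma$ to itself, so the vanishing of the tangential second derivative forces a generator of $T_{q_0}\tilde\Sigma$ into $\ker D^2\tilde U_S(q_0)$. Hence $q_0$ is a degenerate critical point of $\tilde U_S$, that is a degenerate equivalence class of rosette central configurations, which proves the corollary. \emph{The main obstacle} is precisely the identification $T_{q_0}\tilde\Sigma=(T_{q_0}{\mathbb C}P^{n-2})^{G}$: one must check that $\tilde\Sigma$ exhausts the invariant directions, since otherwise the zero of the one dimensional tangential form could be cancelled by coupling to an invariant normal direction and the full Hessian might remain nondegenerate.
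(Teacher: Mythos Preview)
Your argument is correct and the route is genuinely different from the paper's. The paper argues top-down, directly on $\tilde U_S$: if every rosette critical point $q_l^0$ at the bifurcation value $\mu_0$ were nondegenerate in the sense of Definition~2, the implicit function theorem applied to $D\tilde U_S(q;\mu)=0$ on the full quotient space would produce, for each $l$, a unique smooth branch $q_l(\mu)$ of critical points, so the number of rosette classes would be locally constant near $\mu_0$, contradicting Theorem~\ref{mainth}. You instead work bottom-up through the one-variable reduction: obtain a double root of $F$, identify $F$ with the tangential derivative of $\tilde U_S|_{\tilde\Sigma}$ so that the double root becomes a zero of the restricted Hessian, and then use the $G$-equivariance of $D^2\tilde U_S(q_0)$ to promote this to a null direction of the full Hessian. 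Your approach makes explicit which direction is degenerate (the tangent to $\tilde\Sigma$) and isolates the one nontrivial geometric input---that $\tilde\Sigma$ is a component of the fixed-point set of the dihedral symmetry, so that $T_{q_0}\tilde\Sigma$ exhausts the $G$-invariant tangent directions---whereas the paper's two-line proof hides the same symmetry consideration inside the implicit-function step (one still needs $G$-invariance and uniqueness to know that the persisting branches $q_l(\mu)$ remain in $\tilde\Sigma$).
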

\begin{proof} The proof is by contradiction.
Consider the potential $\tilde U_S(q;\mu)$ for the configuration
under discussion in this paper, where we put into evidence the
dependence on the mass $\mu$. Let $q_1^0,\ldots,q_l^0$ be the
critical points of $\tilde U_S$ for $\mu=\mu_0$, where $\mu_0$ is
the bifurcation value. Assume that the class of central
configurations is nondegenerate for every $q_l^0$. This means that
$D^2\tilde U_S(q^0_l;\mu)$ has bounded inverse. But then by the
implicit function theorem, there exist a neighborhood $B$ of
$\mu_0$ and unique functions $\{q_l(\mu)\}_{l=1}^n$ defined in
$B$, such that $q_l(\mu_0)=q_l^0$ and $D\tilde
U_S(q_l(\mu);\mu)=0$. This contradicts the assumption that $\mu_0$
is a bifurcation value.
\end{proof}

The proof of  Theorem \ref{mainth} requires several preparations. The reminder of this section is devoted to such preparations  and to the proof of Theorem \ref{mainth}

First of all observe that the  central configurations, when $\ep\neq 1$ can also be viewed as the solutions of $h(x,\ep)=\mu$ where

\[h(x, \epsilon) = - n\, k_n\, \dfrac{\epsilon - x^3}{1 - x^3} - \dfrac{x^2}{(1-x^3)}\,
\sum_{k = 1}^n\dfrac{x\,(1-\epsilon) -
(1-\epsilon\,x^2)\,\cos\phi_k}{(1+x^2 - 2\,x\,\cos\phi_k)^{3/2}}\]
Hereinafter, we say $x$ to be a rosette central
configuration if $x$ is solution of the equation $\mu = h(x,\epsilon)$.
Let $u_k=\cos\phi_k$ then
$$h(x,\epsilon)=h_0(x)+(1-\epsilon) h_1(x)$$
 where
\[
h_0(x)=-nk_n+\frac{x^2(1-x^2)}{(1-x^3)}\sum_{k=1}^{n}\frac{u_k}{(1+x^2-2xu_k)^{3/2}}
\]
and
\[
h_1(x)=\frac{nk_n}{(1-x^3)}-\frac{x^3}{1-x^3}\sum_{k=1}^n\frac{(1-xu_k)}{(1+x^2-2xu_k)^{3/2}}.
\]

\subsection{The case $x>1$, $\ep\in(0,1)$}

We now study the number of central configurations for $x>1$ and $\ep\in(0,1)$.
It is easy to show that, for any $\mu>0$, there is at least one rosette central configuration with $x>1$.
This follows from the limits
\[\lim_{x\to 1^+}h(x,\epsilon) = \infty,\ \ \ \ \lim_{x\to \infty} h(x,\ep) = -n\,k_n < 0\]
and an application of the Intermediate Value Theorem. The first limit is
\[\lim_{x\to 1^+}h(x,\epsilon) =\infty\times \sgn\left( (\ep-1)A_n\right)\]
where
 \beq
A_n=nk_n-\frac 1 4 \sum_{k=1}^n \csc\left( \frac{\phi_k}{2}
\right) \eeq and $\sgn\left( (\ep-1)A_n\right)=1$ since
$\ep-1<0$ and $A_n<0$ by the following Lemma

\begin{lemma}\label{lemmaA_n}
For all $n \geq 2$,
\[A_n<0.\]
\end{lemma}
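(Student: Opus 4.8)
The plan is to recast $A_n$ as a single difference of cosecant sums sampled at multiples of $\pi/(2n)$, and then to exploit the strict convexity of the cosecant. Since $n\,k_n=\frac14\sum_{k=1}^{n-1}\csc\frac{k\pi}{n}=\frac14\sum_{k=1}^{n-1}\csc\frac{2k\pi}{2n}$ and $\csc\frac{\phi_k}{2}=\csc\frac{(2k-1)\pi}{2n}$, I would introduce the function $g(t)=\csc\frac{\pi t}{2n}$ and write
\[
A_n=\frac14\left(S_e-S_o\right),\qquad
S_e=\sum_{k=1}^{n-1}g(2k),\qquad
S_o=\sum_{k=1}^{n}g(2k-1),
\]
so that $S_e$ runs over the even integers $2,4,\dots,2n-2$ and $S_o$ over the odd integers $1,3,\dots,2n-1$ in $(0,2n)$. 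The goal then reduces to proving the strict inequality $S_e<S_o$.

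The key observation is that $g$ is \emph{strictly convex} on $(0,2n)$, being the composition of the strictly convex function $\csc$ on $(0,\pi)$ with the increasing linear map $t\mapsto \pi t/(2n)$. Applying midpoint convexity at each interior even node, I would use $2k=\tfrac12\bigl((2k-1)+(2k+1)\bigr)$ to get the strict bound $g(2k)<\tfrac12\bigl(g(2k-1)+g(2k+1)\bigr)$ for $k=1,\dots,n-1$. Summing over $k$ gives
\[
S_e<\frac12\sum_{k=1}^{n-1}\bigl(g(2k-1)+g(2k+1)\bigr)
=\frac12\bigl(2S_o-g(1)-g(2n-1)\bigr),
\]
where the telescoping uses $\sum_{k=1}^{n-1}g(2k-1)=S_o-g(2n-1)$ and $\sum_{k=1}^{n-1}g(2k+1)=S_o-g(1)$. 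Finally the reflection symmetry $g(t)=g(2n-t)$ (from $\csc(\pi-\theta)=\csc\theta$) yields $g(1)=g(2n-1)=\csc\frac{\pi}{2n}$, whence $S_e<S_o-\csc\frac{\pi}{2n}<S_o$ and therefore $A_n<-\tfrac14\csc\frac{\pi}{2n}<0$. The nonemptiness of $S_e$ (hence at least one strict midpoint inequality) holds already for $n\geq2$, so the argument covers the whole range.

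The conceptual crux is recognizing the interlacing of the two node sets together with the convexity that forces the even (interior) samples to sit strictly below the chordal averages of their odd neighbors; the two extra odd nodes $j=1$ and $j=2n-1$, sitting closest to the singularities of $\csc$ at $0$ and $\pi$, are exactly what makes $S_o$ the larger sum. I expect the only place demanding care to be the index bookkeeping in the telescoping step and the explicit check that the chosen interval keeps the argument of $\csc$ strictly inside $(0,\pi)$ so that strict convexity genuinely applies; the convexity of $\csc$ itself is standard and can be recorded via $(\csc\theta)''=\csc\theta\,(\cot^2\theta+\csc^2\theta)>0$ on $(0,\pi)$.
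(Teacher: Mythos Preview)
Your proof is correct. Both you and the paper rewrite $A_n$ as the difference of cosecant sums over the interlaced nodes $\frac{k\pi}{n}$ and $\frac{(2k-1)\pi}{2n}$, but the mechanisms used to compare the two sums differ. The paper pairs each term $\csc\frac{k\pi}{n}$ with a single adjacent odd node, shifting left on $(0,\pi/2]$ and right on $[\pi/2,\pi)$ so that monotonicity of $\csc$ on each half gives a negative contribution; this forces a case split according to the parity of $n$ to handle the middle index, and leaves one unpaired negative term. Your argument instead invokes the strict convexity of $\csc$ on $(0,\pi)$ to bound each even node by the midpoint average of its two odd neighbors, then telescopes. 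This avoids any parity case distinction and, as a byproduct, yields the quantitative estimate $A_n<-\tfrac14\csc\tfrac{\pi}{2n}$, which the paper does not obtain. The paper's version is marginally more elementary in that it uses only monotonicity rather than second-order information, but your route is cleaner and more uniform.
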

\begin{proof}
Clearly
\beq
A_n=\frac{1}{4}\left(\sum_{k=1}^{n-1}\csc \frac{k\,\pi}{n} -
\sum_{k=1}^{n}\csc\left(\frac{k\,\pi}{n} - \frac{\pi}{2\,n}\right)\right)
\eeq
therefore  when $n$ is even one has
\beq\begin{split}
 \sum_{k=1}^{n-1}\csc&
\frac{k\,\pi}{n} - \sum_{k=1}^{n}\csc\left(\frac{k\,\pi}{n} - \frac{\pi}{2\,n}\right)\\
=& \sum_{k = 1}^{n/2}\left(\csc \frac{k \pi}{n} -
\csc\left(\frac{k\,\pi}{n} - \frac{\pi}{2\,n}\right)\right) -
\csc\left(\frac{\pi}{2} + \frac{\pi}{2\,n}\right)\\
&+\sum_{k = \frac{n}{2} + 1}^{n-1}\left(\csc \frac{k\,\pi}{n} -
\csc\left(\frac{k\,\pi}{n} + \frac{\pi}{2\,n}\right)\right)<0\end{split}
\eeq
while when $n$ is odd
\beq\begin{split}
\sum_{k=1}^{n-1}\csc& \frac{k\,\pi}{n} -
\sum_{k=1}^{n}\csc\left(\frac{k\,\pi}{n}-\frac{\pi}{2\,n}\right)\\
=& \sum_{k=1}^{(n-1)/2}\left(\csc\frac{k\,\pi}{n} -
\csc\left(\frac{k\,\pi}{n}
- \frac{\pi}{2\,n}\right)\right) - \csc \frac{\pi}{2}\\
&+ \sum_{k = (n+1)/2}^{n-1}\left(\csc \frac{k\,\pi}{n} - \csc
\left(\frac{k\,\pi}{n} + \frac{\pi}{2\,n}\right)\right)< 0.
\end{split}\eeq
\end{proof}

 We now want to  show that
when $\mu$ is large enough, for every $\epsilon \in (0,1)$, there
is exactly one rosette central configuration with $x > 1$, i.e.,
we prove the following

\begin{proposition}\label{propx>1}
For every $\epsilon \in (0,1)$ there exists a $\hat{\mu}$ such
that for every $\mu>\hat{\mu}$ there is one and only one rosette
central configuration
\end{proposition}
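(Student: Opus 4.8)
The plan is to localize the solutions: I will show that once $\mu$ is large, every solution of $\mu=h(x,\ep)$ is forced into a fixed short interval $(1,x_0)$ to the right of the pole $x=1$, and that on this interval $h$ is strictly monotone, so the solution is unique. The two ingredients are (i) a local analysis of $h$ at $x=1$ showing a simple pole with positive residue (hence strict decrease just to the right of $1$), and (ii) a uniform upper bound for $h$ on the complement $[x_0,\infty)$; choosing $\hat\mu$ above that bound does the rest.

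For step (i), write $h=g/(1-x^3)$ with $g(x)=-nk_n(\ep-x^3)-x^2 G(x)$ and $G(x)=\sum_{k=1}^n\frac{x(1-\ep)-(1-\ep x^2)\cos\phi_k}{(1+x^2-2x\cos\phi_k)^{3/2}}$. The denominators equal $|1-xe^{i\phi_k}|^2$ and never vanish for $x>0$ (this would force $\phi_k\equiv 0\bmod 2\pi$), so $g$ is real-analytic near $x=1$, while $1-x^3=-(x-1)(x^2+x+1)$ has a simple zero. Evaluating the numerator at $x=1$, using $1-\cos\phi_k=2\sin^2(\phi_k/2)$ to simplify $G(1)=\frac{1-\ep}{4}\sum_k\csc(\phi_k/2)$, I get $g(1)=(1-\ep)A_n$. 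Factoring out the simple pole then yields
\[
h(x,\ep)=\frac{a_{-1}}{x-1}+\psi(x),\qquad a_{-1}=-\frac{(1-\ep)A_n}{3},
\]
with $\psi$ analytic near $x=1$. By Lemma~\ref{lemmaA_n} we have $A_n<0$, so $a_{-1}>0$ for every $\ep\in(0,1)$. Differentiating gives $h'(x)=-a_{-1}/(x-1)^2+\psi'(x)$ with $\psi'$ bounded near $1$; since the singular term dominates, $h'<0$ on some interval $(1,x_0)$. Thus $h$ is strictly decreasing there and maps $(1,x_0]$ bijectively onto $[h(x_0),\infty)$.

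For step (ii) and the conclusion, $h$ is continuous on $[x_0,\infty)$ with $\lim_{x\to\infty}h=-nk_n$, hence bounded above; set $M=\sup_{x\ge x_0}h(x)<\infty$ and take $\hat\mu=M$. For $\mu>\hat\mu$ there is no solution in $[x_0,\infty)$, where $h\le M<\mu$. On $(1,x_0)$, strict monotonicity together with $\lim_{x\to1^+}h=\infty$ and $h(x_0)\le M<\mu$ produces exactly one solution. Hence for every $\mu>\hat\mu$ there is one and only one rosette central configuration with $x>1$.

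I expect the local analysis of step (i) to be the main obstacle, since everything afterward is soft (continuity, the intermediate value theorem, and the choice of $\hat\mu$). The delicate point is that none of the global curvature or monotonicity of $h$ on the whole ray $x>1$ is needed—only the single fact that the simple-pole term dominates $h'$ near $x=1$, and this rests entirely on the residue identity $g(1)=(1-\ep)A_n$ combined with the already-established inequality $A_n<0$. Verifying $g(1)=(1-\ep)A_n$ is a direct but slightly fiddly trigonometric simplification that must be carried out carefully.
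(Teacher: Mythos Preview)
Your proof is correct and follows essentially the same approach as the paper's: expand $h(x,\ep)=-\dfrac{(1-\ep)A_n}{3(x-1)}+O((x-1)^0)$ near $x=1$, use $A_n<0$ to get a positive residue (hence uniqueness on a short interval $(1,x_0)$), and then invoke the boundedness of $h$ on $[x_0,\infty)$ coming from $\lim_{x\to\infty}h=-nk_n$ to rule out further solutions once $\mu$ exceeds the supremum there. Your argument is in fact slightly more explicit than the paper's---you justify uniqueness near $x=1$ via the derivative $h'=-a_{-1}/(x-1)^2+\psi'(x)$ rather than simply asserting it---but the structure and the key ingredients (the residue identity $g(1)=(1-\ep)A_n$ together with Lemma~\ref{lemmaA_n}) are identical.
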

\begin{proof}
Observe that one can write \beq h(x,\ep)= -\frac {(1-\ep)
A_n}{3(x-1)}+O((x-1)^0). \eeq Therefore there exist
$\hat{\mu}_0>0$ and $\delta > 0$ such that for any $\mu >
\hat{\mu}_0$ the equation $h(x,\ep)=\mu$ has a unique solution in
$(1,1+\delta)$. Moreover the function $h(x,\ep)$ has a maximum
value $\hat{\mu}_1$ in $[1+\delta,\infty)$, since $\lim_{x\to
\infty}h(x,\ep)=-nk_n$. Let
\[\hat{\mu}=\max(\hat{\mu}_0,\hat{\mu}_1)\]
then, if $\mu> \hat{\mu}$, the equation $h(x,\ep)=\mu$ has a
unique solution.
\end{proof}

\subsection{The case $0<x<1$, $\ep\in(0,1)$}
We now study the number of central configurations for $x<1$ and $\ep\in(0,1)$.
In particular we show that
\begin{proposition}\label{propx<1}
For any $n>3$  and $\epsilon\in (0,1)$,
\begin{enumerate}
\item there is a $\mu_n^*>0$  such that for every $0<\mu<\mu_n^*$
there  are at least two rosette central configurations, with $x\in
(0,1)$ \item there is a $\check{\mu}\geq \mu_n^*   $ such that for
every $\mu> \check{\mu}$ there are no rosette central configurations
with $x\in (0,1)$ .
\end{enumerate}
\end{proposition}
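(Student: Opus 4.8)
The plan is to read off the number of solutions of $h(x,\ep)=\mu$ on $(0,1)$ from the shape of the graph of $x\mapsto h(x,\ep)$, using the Intermediate Value Theorem as in Proposition \ref{propx>1}. First I would record the boundary behaviour. From the closed form of $h$ one gets $\lim_{x\to 0^+}h(x,\ep)=-\ep\,nk_n$, a finite negative number, while the Laurent expansion $h(x,\ep)=-\frac{(1-\ep)A_n}{3(x-1)}+O((x-1)^0)$ used in Proposition \ref{propx>1}, together with $A_n<0$ (Lemma \ref{lemmaA_n}) and $1-\ep>0$, now gives $\lim_{x\to 1^-}h(x,\ep)=-\infty$, since the factor $x-1$ has the opposite sign to the case $x>1$. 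Thus $h$ is negative at both ends of $(0,1)$, and any value $\mu>0$ can be attained only on the set where $h>0$.

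The device for producing two solutions is to exhibit one explicit point at which $h$ is positive, using the decomposition $h=h_0+(1-\ep)h_1$. Since $h_1(0^+)=nk_n>0$ and $h_1(x)\to-\infty$ as $x\to 1^-$ (its numerator tends to $A_n<0$), $h_1$ has a zero $x_{\ast\ast}\in(0,1)$, and there $h(x_{\ast\ast},\ep)=h_0(x_{\ast\ast})$ independently of $\ep$. I would then set $\mu_n^*:=h_0(x_{\ast\ast})$. Granting $\mu_n^*>0$, part (1) follows: for $0<\mu<\mu_n^*$ we have $h(x_{\ast\ast},\ep)=\mu_n^*>\mu$ while $h<\mu$ near both endpoints, so the Intermediate Value Theorem applied on $(0,x_{\ast\ast})$ and on $(x_{\ast\ast},1)$ yields at least two roots of $h(x,\ep)=\mu$ in $(0,1)$, i.e.\ at least two rosette central configurations.

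Part (2) is then essentially free. Because $h$ is continuous on $(0,1)$, tends to the finite value $-\ep\,nk_n$ as $x\to 0^+$ and to $-\infty$ as $x\to 1^-$, the set $\{h\ge \mu_n^*/2\}$ is compact, so $\check\mu:=\sup_{x\in(0,1)}h(x,\ep)$ is finite and attained; moreover $\check\mu\ge h_0(x_{\ast\ast})=\mu_n^*$. For every $\mu>\check\mu$ the equation $h(x,\ep)=\mu$ has no solution in $(0,1)$, which is exactly the assertion.

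Everything therefore reduces to the single inequality $\mu_n^*=h_0(x_{\ast\ast})>0$, and this is the main obstacle and the place where $n>3$ enters. Since $h_0(0^+)=-nk_n<0$ and $h_0(1^-)=\mu_c(n)>0$ (the Sekiguchi value of Theorem \ref{theosekiguchi}), $h_0$ has a zero $x_\ast$ with $h_0<0$ on $(0,x_\ast)$ and $h_0>0$ on $(x_\ast,1)$; the inequality $h_0(x_{\ast\ast})>0$ is thus equivalent to $x_{\ast\ast}>x_\ast$, i.e.\ to the statement that the interval on which $h_0>0$ and $h_1>0$ simultaneously is nonempty. Establishing this ordering amounts to a comparison of the trigonometric sums in $u_k=\cos\phi_k$ that define $h_0$ and $h_1$, and it is precisely here that $n=3$ differs from $n>3$: the symmetry sum $\sum_{k=1}^n\cos^3\phi_k$ vanishes for every $n\neq 3$ but equals $-3/4$ when $n=3$, which changes the local structure of $h$ near $x=0$ and, with it, the relative position of $x_\ast$ and $x_{\ast\ast}$. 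Carrying out this estimate, uniformly in $\ep\in(0,1)$ and isolating the feature that breaks down at $n=3$, is the technical heart of the argument; the remaining counting is the routine Intermediate-Value bookkeeping described above.
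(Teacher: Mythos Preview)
Your overall architecture matches the paper exactly: boundary values $h(0^+,\ep)=-\ep nk_n<0$ and $h(1^-,\ep)=-\infty$ (via $A_n<0$), then the decomposition $h=h_0+(1-\ep)h_1$, a zero $x_n^*$ of $h_1$ in $(0,1)$ with $\mu_n^*:=h_0(x_n^*)$, and the Intermediate Value Theorem on each side of $x_n^*$. Part (2) is also the same as the paper's: just take $\check\mu=\max_{(0,1)}h$. So far, so good.

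The genuine gap is the step you flag as ``the technical heart'': you do not prove $h_0(x_n^*)>0$, and the route you sketch for it is not the right one. Two points. First, you silently assume $h_0$ and $h_1$ each have a \emph{unique} sign change on $(0,1)$ when you recast the problem as the ordering $x_{**}>x_*$; neither monotonicity is established, and in fact the paper never uses or claims it. Second, your proposed mechanism --- the identity $\sum_k\cos^3\phi_k=0$ for $n\neq 3$ versus $-3/4$ for $n=3$, controlling ``the local structure of $h$ near $x=0$'' --- cannot do the job: the relevant zero of $h_1$ lies well inside the interval (numerically in $(0.54,1)$), so Taylor information at $x=0$ does not locate it relative to the zero of $h_0$. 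The paper's own account of the $n=3$ failure is not a small-$x$ phenomenon but simply that $h_0(x_3^*)\approx -0.188<0$ at $x_3^*\approx 0.617$.

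What the paper actually does for $h_0(x_n^*)>0$ is quite different from your sketch, and is the content of its Lemma \ref{lemmamain}: for $4\le n\le 106$ the inequality is checked numerically; for $n\ge 107$ it first traps a zero of $h_1$ in $(\tfrac{54}{100},1)$ using a logarithmic lower bound $k_n^-$ for $k_n$ (Lemma \ref{Rick}) together with $\tilde h_1(1)=A_n<0$, and then shows $h_0>0$ at any such zero via an auxiliary function $h_2(x)=\sum_k g(x,u_k)$ and an explicit counting estimate on the index sets $\{u_k<R_1(x)\}$ and $\{u_k\ge R_2(x)\}$, reducing to a one-variable inequality $h_3(x;107)>0$ on $(\tfrac{54}{100},1)$. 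Your write-up would need either this machinery or a genuine substitute; the $\cos^3$ identity, while correct, is not it.
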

\begin{proof}(a)
Observe that, if $\epsilon \in (0,1)$,
$$h(0,\epsilon) = - n\,k_n\,\epsilon < 0, \quad \lim_{x\to 1^-}h(x, \epsilon) = - \infty\times\sgn\left((\ep-1)A_n)\right)=-\infty,$$
where the limit follows from Lemma \ref{lemmaA_n}. If there exists
$x_n^*\in (0, 1)$ such that $h_1(x_n^*) = 0$ and $h_0(x_n^*) > 0$,
then by the Intermediate Value Theorem, $\mu=h(x,\ep)$ has at
least two solutions for every $0<\mu<\mu_n^*=h_0(x_n^*)$. To
complete the proof it  is necessary to show the existence of
$x_n^*$. The existence of $x_n^*$ will be proved in Lemma
\ref{lemmamain}.

(b) Since $h(0,\epsilon)  < 0$ and $\lim_{x\to 1^-}h(x, \epsilon)
= -\infty<0$ the function $h(x,\ep)$ has a maximum value in
$[0,1]$. Let $\check{\mu}$ be such maximum. Then the equation
$\mu=h(x,\ep)$ has no solutions for $x\in(0,1)$ if
$\mu>\check{\mu}$.
\end{proof}
To complete the proof of the proposition above, and prove Lemma \ref{lemmamain} we  need the following technical result
\begin{lemma} Let
\beq\label{eqrick} k_n^-=\frac{1}{4\pi}\ln\left(\frac{1+\cos\frac{\pi}{n}}{1-\cos\frac{\pi}{n}} \right)+\frac{1}{4n\sin\frac{\pi}{n}}\eeq
then for all $n\geq 3$
\[k_n>k_n^-.\] Moreover $k_n^-$ is  monotonically increasing with $n$.
\end{lemma}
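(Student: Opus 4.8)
The plan is to recognize $k_n^-$ as an integral-plus-boundary-term approximation to the finite sum $k_n = \frac{1}{4n}\sum_{k=1}^{n-1}\csc(k\pi/n)$, and then to exploit the convexity of the cosecant to compare the sum with the integral. First I would compute the antiderivative $\int \csc(\pi x/n)\,\d x = \frac{n}{\pi}\ln\tan(\pi x/(2n))$ and use the half-angle identity $\frac{1+\cos\th}{1-\cos\th}=\cot^2(\th/2)$ to verify the exact identity
\[
\frac{1}{4n}\int_1^{n-1}\csc\!\left(\frac{\pi x}{n}\right)\d x = \frac{1}{4\pi}\ln\!\left(\frac{1+\cos\frac{\pi}{n}}{1-\cos\frac{\pi}{n}}\right),
\]
which is precisely the first term of $k_n^-$. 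The second term $\frac{1}{4n\sin(\pi/n)} = \frac{1}{4n}\csc(\pi/n)$ is exactly the $k=1$ summand of $k_n$. Hence, after multiplying by $4n$ and cancelling the $k=1$ term, the inequality $k_n>k_n^-$ is equivalent to
\[
\sum_{k=2}^{n-1}\csc\!\left(\frac{k\pi}{n}\right) > \int_1^{n-1}\csc\!\left(\frac{\pi x}{n}\right)\d x.
\]

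To prove this, observe that $g(x) = \csc(\pi x/n)$ is strictly convex on $(0,n)$, since $\csc''(u)=\csc u\,(\cot^2 u + \csc^2 u)>0$ on $(0,\pi)$. The trapezoidal rule strictly overestimates the integral of a strictly convex function, so $\int_k^{k+1} g < \frac{g(k)+g(k+1)}{2}$ on each subinterval; summing over $k=1,\dots,n-2$ (a nonempty range for $n\geq 3$) gives $\int_1^{n-1} g < \frac{g(1)+g(n-1)}{2} + \sum_{k=2}^{n-2} g(k)$. Using the symmetry $g(k)=g(n-k)$, the right-hand side collapses to $\sum_{k=1}^{n-2}g(k)=\sum_{k=2}^{n-1} g(k)$, which establishes the displayed strict inequality and hence $k_n>k_n^-$.

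For the monotonicity I would set $t=\pi/n\in(0,\pi/3]$ and write $k_n^- = \frac{1}{4\pi}B(t)$ with $B(t)=\ln\frac{1+\cos t}{1-\cos t} + \frac{t}{\sin t}$. A direct computation, using $\frac{1}{1+\cos t}+\frac{1}{1-\cos t}=\frac{2}{\sin^2 t}$, gives $B'(t) = -\frac{\sin t + t\cos t}{\sin^2 t}$, which is strictly negative on $(0,\pi/2)$ and in particular on $(0,\pi/3]$. Thus $k_n^-$, viewed as a function of $t$, is strictly decreasing; since $t=\pi/n$ decreases as $n$ increases, $k_n^-$ is strictly increasing in $n$. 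The only genuinely delicate point I anticipate is verifying the closed-form identity for the integral (the integration together with the half-angle manipulation); once the first term of $k_n^-$ is identified as that integral, both assertions reduce to the routine convexity estimate and a one-variable derivative sign check.
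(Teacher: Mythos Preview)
Your argument is correct and is essentially the paper's own proof carried out in full detail: both identify the first term of $k_n^-$ as $\frac{1}{4n}\int_1^{n-1}\csc(\pi x/n)\,\d x$, invoke the trapezoidal overestimate for the strictly convex cosecant, and use the symmetry $g(k)=g(n-k)$ to match the boundary terms, while the monotonicity is handled in both cases by substituting the continuous variable $t=\pi/n$ and checking the sign of the derivative. The only difference is cosmetic---you cancel the $k=1$ term first and then apply the trapezoidal bound, whereas the paper writes the trapezoidal bound directly and leaves the algebraic identification implicit.
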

\begin{proof}\label{Rick}
The sum:
\[\sum_{k=1}^{n-1}\csc\frac{\pi k}{n}\]
can be estimated using the trapezoidal rule. Since $g(u)=\csc\frac{\pi u}{n}$ is convex on $[1,n-1]$ the trapezoidal rule gives an upper bound for the integral over $[1,n-1]$:
\[\int_1^{n-1}g(x)~dx<\frac 1 2 g(1)+g(2)+\ldots+g(n-2)+\frac{1}{2}g(n-1).\]
This gives the formula for $k_n^-$. Moreover $k_n^-$ is monotonically increasing since the derivative of the function, obtained replacing $\frac \pi n$ in $k_n^-$ with the continuous variable $u$,  is negative.
\end{proof}
We can finally prove the following
\begin{lemma}\label{lemmamain}
For any $n>3$, there exists a  $x_n^*\in (\frac{54}{100},1)$ such that $h_1(x_n^*) = 0$ and $h_0(x_n^*)>0$.
\end{lemma}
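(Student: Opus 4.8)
The plan is to obtain $x_n^*$ from the Intermediate Value Theorem applied to the numerator of $h_1$, and then to check the sign of $h_0$ at that point. Since $1-x^3>0$ for $x\in(0,1)$, the equation $h_1(x)=0$ is equivalent to $N(x)=0$, where $N(x):=nk_n-\psi(x)$ and $\psi(x):=x^3\sum_{k=1}^n(1-xu_k)(1+x^2-2xu_k)^{-3/2}$. First I would evaluate the right endpoint: using $1-u_k=2\sin^2(\phi_k/2)$, the sum simplifies at $x=1$ to $\psi(1)=\frac14\sum_{k=1}^n\csc(\phi_k/2)$, so that $N(1)=nk_n-\frac14\sum_{k=1}^n\csc(\phi_k/2)=A_n<0$ by Lemma \ref{lemmaA_n}. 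It therefore suffices to show $N(54/100)>0$; then $N$ changes sign on $(54/100,1)$ and the Intermediate Value Theorem produces a zero $x_n^*$ of $h_1$ there.

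The heart of the matter is the estimate $N(54/100)>0$, i.e.\ $nk_n>\psi(54/100)$, which must hold for every $n>3$. I would bound the left-hand side below by the preceding lemma, $nk_n>nk_n^-$, observing that $k_n^-$ is explicit and satisfies $k_n^-\sim(\ln n)/(2\pi)$ as $n\to\infty$, so that $nk_n^-$ grows like $n\ln n$. For the right-hand side I would write $1+x^2-2xu_k=(1-x)^2+4x\sin^2(\phi_k/2)$ and bound $\psi(54/100)$ from above term by term (or by comparing the sum with the corresponding integral via the trapezoidal rule, as in the preceding lemma), obtaining a bound that is only linear in $n$. Since $n\ln n$ eventually dominates any multiple of $n$, the inequality holds for all $n$ past an explicit threshold $n_0$, and the finitely many cases $4\le n<n_0$ are settled by direct computation. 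This growth comparison, together with the fact that $54/100$ is chosen small enough for the finite checks to pass yet still below $x_n^*$ for every $n$, is the main obstacle.

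It remains to verify $h_0(x_n^*)>0$. Inserting the relation $nk_n=\psi(x)$, valid at any zero of $h_1$, into the formula for $h_0$ and simplifying yields, at such a point, the identity $h_0(x)=x^2\big(c(x)S(x)-xT(x)\big)$, where $S(x)=\sum_{k=1}^n u_k(1+x^2-2xu_k)^{-3/2}$, $T(x)=\sum_{k=1}^n(1+x^2-2xu_k)^{-3/2}$, and $c(x)=(1-x^5)/(1-x^3)$. Hence $h_0(x_n^*)>0$ is equivalent to $S/T>x/c(x)=x(1-x^3)/(1-x^5)$, a purely geometric inequality comparing the weighted mean $S/T$ of the cosines $u_k=\cos\phi_k$ with a quantity that stays below $3/5$ on $(54/100,1)$. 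I would prove $S/T>x/c(x)$ on the entire interval by noting that the weights $(1+x^2-2xu_k)^{-3/2}$ are largest precisely where $u_k$ is largest, namely at the angles $\phi_k$ nearest $0$, so that the weighted mean is pushed above $x/c(x)$; pairing terms through the symmetry $\phi_k\leftrightarrow 2\pi-\phi_k$, isolating the two extreme terms, and bounding the remainder suffices, the margin being tightest for $n=4$ and $x$ near $1$. Together with the zero produced in the first step, this gives $h_0(x_n^*)>0$ and completes the proof.
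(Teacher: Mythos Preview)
Your outline matches the paper's proof almost exactly in structure. For Part~1 the paper does precisely what you propose: it shows $N(1)=A_n<0$ via Lemma~\ref{lemmaA_n}, and at $x=\tfrac{54}{100}$ it uses the crude bound $(1-xu)(1+x^2-2xu)^{-3/2}<(1-x)^{-2}$ to get $\psi(\tfrac{54}{100})<n\cdot(\tfrac{54}{100})^3/(1-\tfrac{54}{100})^2\approx 0.744\,n$, while $nk_n>nk_n^-$ with $k_n^->0.75$ once $n\ge 107$; the cases $4\le n\le 106$ are done numerically. So your growth comparison ($n\log n$ versus linear) is the same idea, only less sharp.

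For Part~2 your algebraic reduction is also identical to the paper's: substituting $nk_n=\psi(x)$ into $h_0$ gives $h_0(x_n^*)>0\Longleftrightarrow S/T>x/c(x)$, and your threshold $x/c(x)=x(1-x^3)/(1-x^5)$ is exactly the paper's $R_1(x)$, so the condition is the paper's $h_2(x)=\sum_k\bigl(u_k-R_1(x)\bigr)(1+x^2-2xu_k)^{-3/2}>0$. Where you diverge is in how this inequality is established. The paper does not try a pairing/dominance argument; instead it partitions the index set into $J_1=\{k:u_k<R_1(x)\}$ and $J_2=\{k:u_k\ge R_2(x)\}$ with $R_2=0.15\,R_1+0.85$, bounds $g(x,u_k)$ below by its minimum on $J_1$ and by $g(x,R_2(x))$ on $J_2$, counts $|J_1|$ and $|J_2|$ in terms of $n$, and reduces everything to a single one-variable function $h_3(x;n)$ that is increasing in $n$ and verified positive at $n=107$ by a plot. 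Your pairing sketch (``isolate the two extreme terms, bound the remainder'') is plausible and your remark that the margin $S/T-R_1$ at $x\to1^-$ tends to $\cos(\pi/n)-\tfrac35$, tightest at $n=4$, is correct, but as written it does not yet control the full range $x\in(\tfrac{54}{100},1)$ uniformly in $n$; turning it into a proof would require essentially the same kind of counting the paper does.
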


\begin{proof}

Verifying these conclusions numerically for small $n$ is trivial
by common mathematical software (for example,
\verb"Mathematica"\footnote{http://www.wolfram.com/} or
\verb"Matlab"\footnote{http://www.mathworks.com/}). Numerical
results for $4\leq n\leq 106$ are given at Figure \ref{fig:3} (The
solutions $x_n^*$ are found numerically through the function
\verb"FindRoot" provided by \texttt{Mathematica}). The proof for
$n\geq 107$ is given below.

The proof will be completed by showing firstly $h_1(x)=0$ has
solution $x_n^*\in (\frac{54}{100},1)$ and secondly $h_0(x) > 0$
for any $x\in (\frac{54}{100},1)$ such that $h_1(x) = 0$.

\begin{figure}[t]
\centering{
  \includegraphics[width = 12 cm]{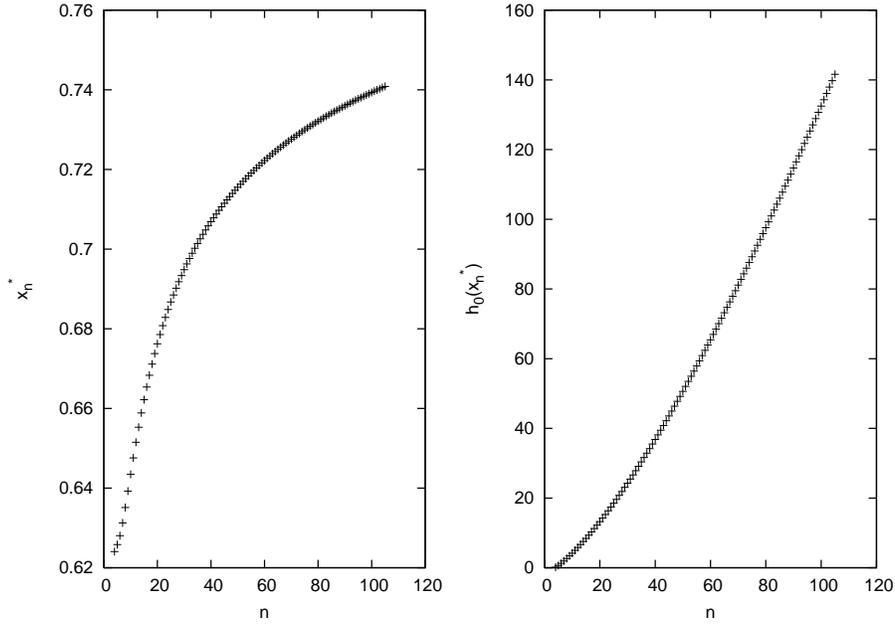}\\
  \caption{The root $x^*_n$ such that $h_1(x_n^*) = 0$ and corresponding $h_0(x_n^*)$ for $4\leq n \leq 106$.}\label{fig:3}
}
\end{figure}

1. We first show that for  any $n\geq 107$, the equation
$h_1(x)=0$ has at least one solution $x_n^*\in(\frac{54}{100},1)$.
To this end, it is sufficient to show that $h_1(\frac{54}{100}) >
0$ and $h_1(1)<0$. Equivalentlly, let
$$\tilde{h}_1(x) = (1-x^3)\,h_1(x) = nk_n-x^3\sum_{k=1}^n \dfrac{1- x u_k}{(1+x^2 - 2 x u_k)^{3/2}}$$
we will show that $\tilde{h}_1(\frac{54}{100}) > 0$ and $\tilde{h}_1(1) < 0$.

When $u\in [-1,1]$ and $x\in [0,1]$, we have
$$\frac{1-xu}{(1+x^2-2xu)^{3/2}} < \frac{1}{(1-x)^2}.$$
and therewith
\[
\tilde{h}_1(x)> n\left(k_n-\frac{x^3}{(1-x)^2}\right)
\]
Thus, when $n \geq 107$
$$\tilde{h}_1(\frac{54}{100}) > n\,\left(k_n - \frac{ \left(\frac{54}{100}\right)^3}{(1-\frac{54}{100})^2}\right) > n\,(k_n^- - \frac{75}{100}) \geq n\,(k_{107}^- - \frac{75}{100}) > 0$$
where $k_{107}^- = 0.7514096544$ was computed using Lemma
\ref{Rick} and $k_n^->k_{107}^-$ since $k_n^-$ is monotonically
increasing with $n$.

A simple computation shows that \beq
\tilde{h}_1(1)=A_n=\frac{1}{4}\left(\sum_{k=1}^{n-1}\csc
\frac{k\,\pi}{n} - \sum_{k=1}^{n}\csc(\frac{k\,\pi}{n} -
\frac{\pi}{2\,n})\right), \eeq and thus, by Lemma \ref{lemmaA_n},
$\tilde{h}_1(1) < 0$ for any $n$. Hence, we conclude that for any
$n\geq 3$, there exist $x_n^*\in (\frac{54}{100},1)$, such that
$h_1(x_n^*) = 0$.

2. We now show that for any $n\geq 107$ and  $x_n^*\in
(\frac{54}{100},1)$ such that $h_1(x_n^*)=0$, $h_0(x_n^*) > 0$.

Let
$$h_2(x) = \frac{1-x^3}{x^2\,(1-x^5)}\,(h_0(x) + (1-x^3)\,h_1(x))$$
then
$$h_0(x_n^*) = \frac{{x_n^*}^3}{R_1(x_n^*)}\,h_2(x_n^*)$$
where
$$R_1(x) = \frac{x\,(1+x+x^2)}{1+x+x^2 + x^3 + x^4}$$
Thus, it is sufficient to prove that $h_2(x) > 0$ for any $x\in
(\frac{54}{100},1)$. To this end, introduce the notations
\begin{eqnarray*}
R_2(x) &=& 0.15 \,R_1(x) + 0.85\\
g(x,u) &=&\dfrac{u - R_1(x)}{(1 + x^2 - 2\,x\, u)^{3/2}}
\end{eqnarray*}
then
$$h_2(x) = \sum_{k = 1}^ng(x,u_k).$$
It is easy to have  $$0 < R_1(x) < R_2(x) < 1,\ \ \forall
x\in (0,1)$$ Thus, grouping the subscripts $k$ in the summation as following
\begin{eqnarray*}
J_1 &=& \{k\ | 1\leq k\leq n,\ u_k < R_1(x)\}\\
J_2 &=& \{k\ | \ 1\leq k\leq n,\ u_k \geq R_2(x)\}
\end{eqnarray*}
we have
$$h_2(x) \geq \sum_{k\in J_1}g(x,u_k) + \sum_{k\in J_2}g(x,u_k).$$
Now, the function $g(x,u)$ of $u\in [0,1]$ (with given $x\in
(\frac{54}{100},1)$) has minimum at $$u = u_-(x) =
\dfrac{3\,x\,R_1(x) - 1 - x^2}{x}$$ and is increasing when $R_2(x)
< u < 1$. Thus, we have when $k\in J_1$,
$$0 > g(x,u_k) > g(x,u_-(x))$$
and when $k\in J_2$,
$$g(x,u_k) \geq g(x,R_2(x)) > 0$$
The number of elements in $J_1$ and $J_2$ are respectively
\begin{eqnarray*}
N(J_1) = \left\lfloor n\,(1 - \frac{\arccos
R_1(x)}{\pi})\right\rfloor < \frac{n}{\pi}\,(\pi - \arccos R_1(x))
\\
N(J_2)  = 2\,\left\lfloor\frac{n}{2\,\pi}\arccos R_2(x) +
\frac{1}{2}\right\rfloor \geq \frac{n}{\pi}\arccos R_2(x) - 1
\end{eqnarray*}
Therefore, we have
\begin{eqnarray*}
h_2(x)&>&\sum_{k\in J_1}g(x,u_-(x))+ \sum_{k\in J_2}g(x,R_2(x))\\
&=&N(J_1)\,g(x,u_-(x)) + N(J_2)\,g(x,R_2(x))\\
&>&\frac{n}{\pi}\left((\pi - \arccos R_1(x))\,g(x,u_-(x)) +
\arccos R_2(x)\, g(x,R_2(x))\right) - g(x, R_2(x))\\
&:=&h_3(x;n)
\end{eqnarray*}

Now, we only need to verify $h_3(x;n)>0$ for any $n\geq 107$ and
$x\in (\frac{54}{100},1)$. It is evident that $h_3(x,n)$ is
increasing with respect to $n$, and thus $h_3(x;107) > 0$, which
is shown at Figure \ref{fig:1}, is enough to complete the proof.

The Lemma has been proved.
\end{proof}
\begin{figure}[t!]
\centering{
  \includegraphics[width = 8cm]{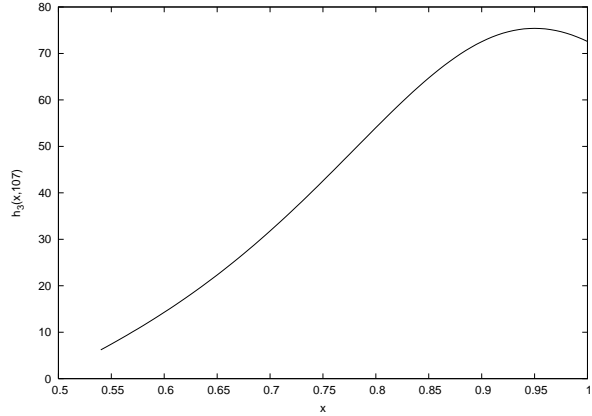}\\
  \caption{The function $h_3(x;n)$ with $n = 107$ and $x\in (0.54,1)$}
  \label{fig:1}}
\end{figure}


\subsection{Proof of Theorem \ref{mainth}}
With all the preparations above we are now well on our way to
proving Theorem \ref{mainth}.

On one hand, using Proposition \ref{propx>1} and \ref{propx<1}, we
have, for every $\ep\in (0,1)$, that if
$\mu>\max(\hat{\mu},\check{\mu})$ the equation $\mu=h(x,\ep)$ has
a unique solution.

On the other hand, by Proposition \ref{propx<1}, we have, for
every $\ep\in (0,1)$,  that if $\mu<\mu^*_n$ the equation
$\mu=h(x,\ep)$ has at least two solutions for $x\in(0,1)$ and at
least one solution for $x>1$. Moreover if $\ep\neq 1$,  $x=1$ is
not a solution of $\mu=h(x,\ep)$. Thus the number of rosette
central configurations changes as the parameter $\mu$ increases.
The fact that this result  holds for every $\ep>0$ follows from
Theorem \ref{theosekiguchi} and the invariance under the
transformation $(x,\ep,\mu)\rightarrow (\frac 1 x, \frac 1 \ep,
\mu\ep)$.

This concludes the proof of Theorem \ref{mainth}.

\section{The case $n=3$}

The case $n=3$ is special, indeed for $n=3$ the proof of Lemma
\ref{lemmamain} fails. This is because
$x_3^*=0.617364>\frac{54}{100}$ but  $h(x_3^*)=-0.188154<0$.

In this case we  study  numerically the maximum
$h_{\mathrm{max}}(\ep)$ of the function $h(x,\ep)$ (as a function
of $x$) on the interval $(0,1)$. Figure \ref{fig:sub}(a) depicts
$h_{\mathrm{max}}(\ep)$ for $\ep\in(0,1)$. Figure \ref{fig:sub}(b)
shows a magnification of Figure \ref{fig:sub}(a) near $\ep=0$
making apparent that, near $\ep=0$, $h_{\mathrm{max}}(\ep)>0$.
From Figure \ref{fig:sub}(a)-(b) it is apparent that
$h_{\mathrm{max}}(\ep)$ is always negative except when $\ep$ is
close to 0 or  to 1. More precisely we find that
$h_{\mathrm{max}}(\ep)>0$ for $\ep\in(0,\ep_1)$ and
$\ep\in(\ep_2,1)$ while $h_{\mathrm{max}}(\ep)<0$ for
$\ep\in(\ep_1,\ep_2)$, where $\ep_1=0.00076760883$ and
$\ep_2=0.97198893434$. On the other hand it can be proved that
$h(x,\ep)$ is a monotone decreasing function with respect to $x$
for $x\in(1,\infty)$ and $\ep\in (0,1)$. In fact, when $n = 3$, we
have
$$h(x,\ep) = h_0(x) + (1-\ep) h_1(x)$$
where
\begin{eqnarray*}
h_0(x)&=&-\frac{\sqrt{3}}{3} +
\frac{x^2\,(1+x)}{1+x+x^2}\left(\frac{1}{(1-x+x^2)^{3/2}} -
\frac{1}{(1+x)^3}\right)\\
h_1(x)&=&\frac{\sqrt{3}}{3 (1-x^3)} -
\frac{x^3}{1-x^3}\left(\frac{1}{(1+x)^2} +
\frac{2-x}{(1-x+x^2)^{3/2}}\right)
\end{eqnarray*}
When $x > 1$, we have $h_0'(x)< 0$ and $h_0'(x) + h_1'(x) < 0$.
From which it is easy to conclude that $h'_x(x,\ep) < 0$ for any
$x > 1$ and $\ep\in (0,1)$. Detailed computations will be omitted.

Consequently for every $\ep\in(\ep_1,\ep_2)$ there is one and only
one rosette central configuration for every value of $\mu>0$. On
the other hand, our numerical study shows that,  if
$\ep\in(0,\ep_1)$ or $\ep\in(\ep_2,1)$ there is a  $\mu^*$ such
that if $0<\mu<\mu^*$ there are three rosette central
configurations and if $\mu>\mu^*$ there is only one.

In conclusion, we have the following.
\begin{proposition}
For  $n = 3$ and  $\ep\in (\ep_1,\ep_2)$, there is
exactly one rosette configuration for any $\mu > 0$. For  $n = 3$ and $\ep
\in (0,\ep_1)$ or $\ep\in (\ep_2,1)$, there exists a value
$\mu_0(\ep)>0$, such that when $\mu > \mu_0(\ep)$, $\mu =
\mu_0(\ep)$, and $ \mu < \mu_0(\ep)$, there are exactly one, two
and three rosette configurations, respectively.
\end{proposition}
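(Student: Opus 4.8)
The plan is to count, for fixed $\ep\in(0,1)$, the solutions $x>0$ of $\mu=h(x,\ep)$, splitting the range of $x$ at the pole $x=1$ (recall that for $\ep\neq 1$ the point $x=1$ is a singularity of $h$, not a solution). The interval $x>1$ is handled rigorously and uniformly in $\ep$: by the monotonicity recorded above, $h(\cdot,\ep)$ is strictly decreasing on $(1,\infty)$, and together with the limits $\lim_{x\to1^+}h=+\infty$ and $\lim_{x\to\infty}h=-nk_n=-\sqrt3/3<0$ it maps $(1,\infty)$ bijectively and decreasingly onto $(-\sqrt3/3,+\infty)$. Since every $\mu>0$ lies in this image, there is exactly one rosette configuration with $x>1$ for every $\mu>0$ and every $\ep\in(0,1)$. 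Thus the entire $\ep$-dependence of the count comes from the interval $(0,1)$, and the task reduces to counting solutions of $\mu=h(x,\ep)$ there.

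On $(0,1)$ I record the endpoint data $h(0,\ep)=-nk_n\ep<0$ and $\lim_{x\to1^-}h=-\infty$, so $h(\cdot,\ep)$ attains a finite interior maximum $\mu_0(\ep):=h_{\max}(\ep)=\max_{x\in(0,1)}h(x,\ep)$. If $\mu_0(\ep)<0$ then $h<\mu$ on all of $(0,1)$ and there are no solutions there, giving one configuration in total — the first assertion, provided $\mu_0(\ep)<0$ exactly on $(\ep_1,\ep_2)$. If instead $\mu_0(\ep)>0$, then, \emph{assuming} $h(\cdot,\ep)$ rises to a single interior maximum and then decreases to $-\infty$, the horizontal level $\mu$ meets the graph twice for $0<\mu<\mu_0$, once (tangentially) for $\mu=\mu_0$, and not at all for $\mu>\mu_0$. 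Adding the single solution from $x>1$ yields totals of three, two, and one respectively, which is the second assertion with $\mu_0(\ep)$ as the bifurcation value.

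The structural heart of the argument is locating the sign of $\mu_0(\ep)$. Here I would exploit that $h(x,\ep)=h_0(x)+(1-\ep)h_1(x)$ is \emph{affine} in $\ep$ for each fixed $x$; hence $\mu_0(\ep)=\sup_{x\in(0,1)}h(x,\ep)$ is a pointwise supremum of affine functions and is therefore a convex function of $\ep$ on $(0,1)$. A continuous convex function has convex (hence interval) sublevel sets, so $\{\ep:\mu_0(\ep)<0\}$ is a single open interval $(\ep_1,\ep_2)$; a short convexity argument then forces $\mu_0>0$ strictly on the complementary ranges $(0,\ep_1)$ and $(\ep_2,1)$. To invoke this it suffices to exhibit one negative value of $\mu_0$ in the interior and positive values near both endpoints: positivity as $\ep\to1^-$ follows from Theorem \ref{theosekiguchi} (three configurations exist at $\ep=1$ for small $\mu$, forcing $\max h_0>0$) together with continuity in $\ep$, while positivity near $\ep=0$ and the interior negative value are checked directly. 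The precise transition values $\ep_1=0.00076760883$ and $\ep_2=0.97198893434$ are then obtained by solving $\mu_0(\ep)=0$ numerically.

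The main obstacle is the unimodality assumption used in the second paragraph: to conclude \emph{exactly} two (respectively one) solutions in $(0,1)$ one must know that $h'(x,\ep)=0$ has a unique root in $(0,1)$ and that the corresponding maximum is nondegenerate, for every $\ep\in(0,\ep_1)\cup(\ep_2,1)$. This is delicate because $h'$ combines the monotone tail near $x=1$ with the sign-indefinite $\phi_k$-sum, so there is no immediate monotonicity of $h'$ to appeal to. I expect a fully rigorous treatment of this step, as well as the three sign evaluations of $\mu_0$ above, to require either a monotonicity estimate for $\partial_x h$ or verified (interval-arithmetic) numerics; this is precisely the point at which the argument becomes computer-assisted.
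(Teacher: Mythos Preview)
Your approach coincides with the paper's: split at $x=1$, use the monotonicity of $h$ on $(1,\infty)$ to get exactly one solution there for every $\mu>0$, and then reduce everything to the sign of $h_{\max}(\ep)=\max_{x\in(0,1)}h(x,\ep)$ together with the shape of $h(\cdot,\ep)$ on $(0,1)$. The paper, like you, leaves both the sign determination and the unimodality on $(0,1)$ to numerical computation (it explicitly says ``our numerical study shows''), so your honest flagging of the computer-assisted steps matches the paper's level of rigor.

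Where you genuinely add something is the convexity observation: since $h(x,\ep)=h_0(x)+(1-\ep)h_1(x)$ is affine in $\ep$ for each fixed $x$, the function $\ep\mapsto h_{\max}(\ep)$ is a supremum of affine functions and hence convex, forcing $\{\ep:h_{\max}(\ep)<0\}$ to be a single interval. The paper does not make this remark; it simply plots $h_{\max}(\ep)$ and reads off the interval $(\ep_1,\ep_2)$ from the picture. Your argument reduces the numerical input needed for the first assertion to three point evaluations (positive near $\ep=0$, positive near $\ep=1$ via Sekiguchi's $\ep=1$ result and continuity, negative at one interior point), which is cleaner than a full scan. The exact count in the second assertion, however, still hinges on the single-hump shape of $h(\cdot,\ep)$ on $(0,1)$, and neither you nor the paper supplies an analytic proof of that.
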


\begin{figure}
\centerline{
\begin{tabular}{cc}
 \resizebox{!}{3.4cm}{\includegraphics{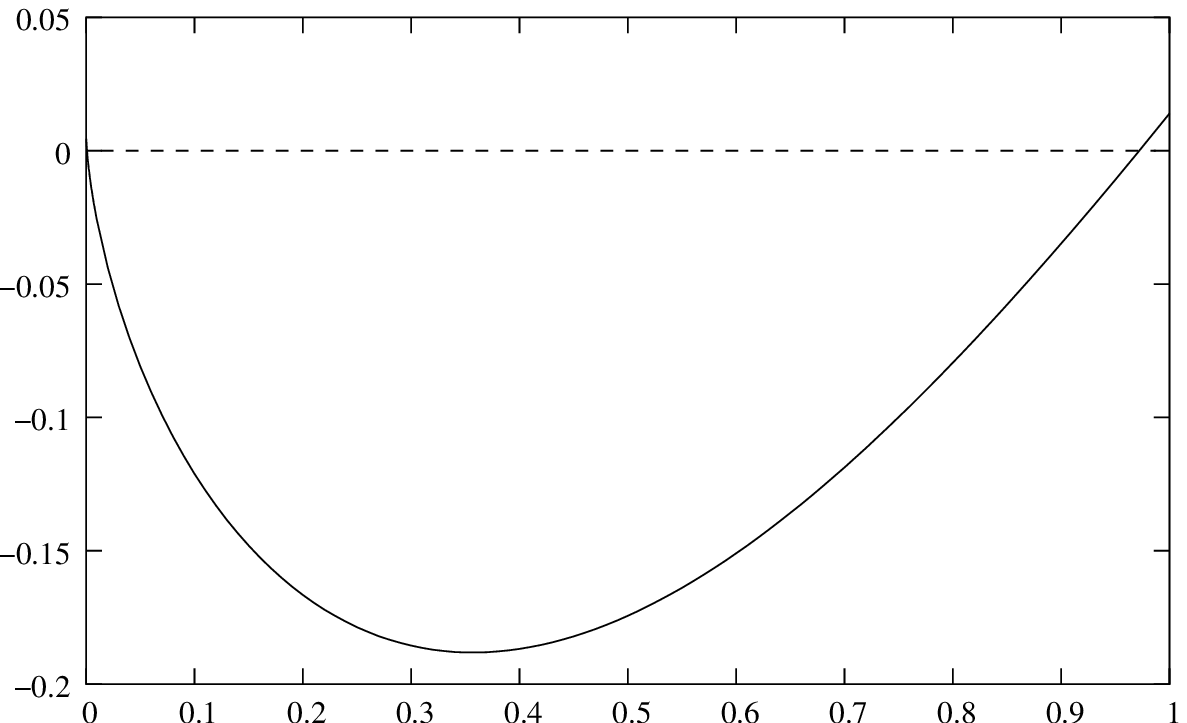}}&
 \resizebox{!}{3.4cm}{\includegraphics{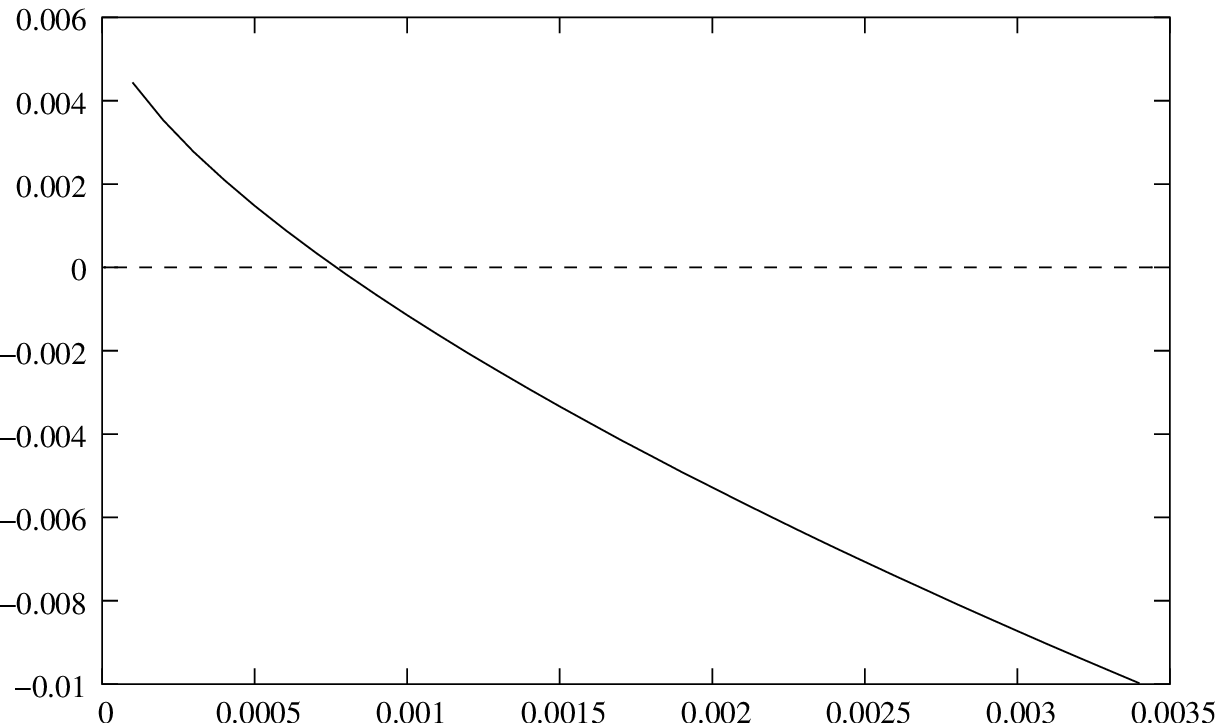}}\\
 (a)&(b)
 \end{tabular}
\caption{(a) The maximum of the function $h(x,\ep)$ as a function of $x$ for $0<\ep<1$. (b) Magnification of (a) near $\ep=0$.}
\label{fig:sub}}
\end{figure}

\acknowledgements
\noindent  MS wish to thank Giampaolo Cicogna and Donald Saari for their comments and suggestions regarding this work.


\end{article}
\end{document}